\documentclass[10pt]{article}

\usepackage[preprint]{neurips_2026}

\usepackage{amsmath,amssymb,amsthm}
\usepackage{enumitem} 

\usepackage{booktabs}
\usepackage{tabularx}
\usepackage{array}
\newcolumntype{Y}{>{\raggedright\arraybackslash}X}

\usepackage{hyperref}
\usepackage{algorithm}
\usepackage{algorithmic}
\usepackage{tikz}
\usetikzlibrary{calc}
\usetikzlibrary{arrows.meta, positioning, fit}
\usepackage{comment}
\usepackage{booktabs}

\usepackage{times}
\usepackage{color}
\usepackage{soul}

\usepackage{pgfplots}
\usepackage{subcaption}
\usepackage{pgfplots}
\pgfplotsset{compat=1.18}
\usepgfplotslibrary{groupplots,colormaps}

\usetikzlibrary{arrows.meta,positioning,calc,decorations.pathreplacing}

\definecolor{darkblue}{rgb}{0.0,0.0,0.2}
\definecolor{darkgreen}{rgb}{0.0,0.3,0.0}
\hypersetup{colorlinks,breaklinks,
	linkcolor=darkblue,urlcolor=darkblue,
	anchorcolor=darkblue,citecolor=darkblue}
\usepackage[colorinlistoftodos,textsize=tiny]{todonotes} 

\newcommand{\Comments}{1}
\newcommand{\mynote}[2]{\ifnum\Comments=1\textcolor{#1}{#2}\fi}
\newcommand{\mytodo}[2]{\ifnum\Comments=1	\todo[linecolor=#1!80!black,backgroundcolor=#1,bordercolor=#1!80!black]{#2}\fi}

\newcommand{\rickt}[1]{\mytodo{red!20!white}{RN: #1}}

\ifnum\Comments=1              
\fi

\usepackage{thm-restate}
\usepackage{thmtools}

\newtheorem{theorem}{Theorem}
\newtheorem{assumption}{Assumption}

\newtheorem{corollary}{Corollary}
\newtheorem{lemma}{Lemma}

\newtheorem{definition}{Definition}
\newtheorem{axiom}{Axiom}

\newcommand{\fee}{\mathrm{fee}}

\newcommand{\Loss}{\mathrm{Loss}}
\newcommand{\Pay}{\mathrm{Pay}}
\newcommand{\mix}{\mathrm{mix}}

\newcommand{\dom}{\mathrm{dom}}
\newcommand{\relint}{\mathrm{relint}}
\newcommand{\Var}{\mathrm{Var}}
\newcommand{\Cov}{\mathrm{Cov}}

\newcommand{\reals}{\mathbb{R}}

\newcommand{\R}{\mathbb{R}}
\newcommand{\E}{\mathbb{E}}

\newcommand{\op}{\mathrm{op}}

\newcommand{\slip}{\mathrm{slip}}

\newcommand{\dyn}{\mathrm{dyn}}

\newcommand{\hyb}{\mathrm{hyb}}
\newcommand{\surr}{\mathrm{surr}}

\title{
Adaptive Liquidity in Prediction Markets via Online Learning
}
\author{
Enrique Nueve\\
Department of Computer Science\\
University of Colorado Boulder\\
\texttt{enrique.nueveiv@colorado.edu} \\
\And 
Bao Nguyen\\
Department of Applied Mathematics \\
University of Colorado Boulder\\
\texttt{bao.nguyen-2@colorado.edu} \\
\AND
Rafael Frongillo \\
Department of Computer Science\\
University of Colorado Boulder \\
\texttt{raf@colorado.edu}
\And
Bo Waggoner \\
Department of Computer Science\\
University of Colorado Boulder\\
\texttt{bwag@colorado.edu} \\
}

\begin{document}
\maketitle


\begin{abstract}
Prediction markets rely on liquidity to convert trades into informative prices, yet existing mechanisms fix liquidity ex ante. This restriction enforces a static trade-off between price responsiveness and worst-case loss despite inherently nonstationary trading conditions. We propose a fundamentally different approach that treats liquidity selection itself as an online learning problem. Our mechanism mixes a family of cost-function markets via learnable weights, yielding a single adaptive market that preserves no-arbitrage, bounded worst-case loss, expressiveness, and positive upside. 
We introduce a hybrid structural risk signal—a per-round objective that
quantifies the trade-off between price impact and inventory risk—and show
that standard online learning algorithms achieve switching-regret guarantees
relative to the best sequence of liquidity regimes in hindsight.
Simulations demonstrate that the mechanism adaptively shifts liquidity across regimes in response to both order flow and inventory dynamics. Our results establish a principled framework for adaptive liquidity, connecting prediction market design with online learning.
\end{abstract}

\section{Introduction and Related Work}

Prediction markets aggregate information by allowing traders to buy and
sell securities whose payoffs depend on future events
\citep{hanson2003combinatorial}. Automated market makers implement this
mechanism via a convex cost function $C$, which determines both prices
$p(q)=\nabla C(q)$ and payments $C(q+r)-C(q)$. This convex structure
ensures path independence, no-arbitrage, and bounded worst-case loss. A
central design choice is liquidity, governed by the curvature of $C$:
higher curvature leads to larger price impact and higher trading costs,
while flatter curvature improves execution but increases the market
maker’s exposure to inventory risk. In this sense, liquidity captures how
sensitive prices are to trades.

In practice, liquidity demand is not static. Trading intensity,
information arrival, and participation evolve over time, so a single
liquidity level is unlikely to be appropriate throughout the lifetime of
a market. Classical mechanisms such as the Log Market Scoring Rule (LMSR)
fix liquidity ex ante, imposing a constant trade-off between price
responsiveness and risk. Prior work has explored limited forms of dynamic
liquidity. In particular, \cite{abernethy2014general} introduces a
volume-based parameterization in which liquidity increases with
cumulative trading volume. While this captures the growth of market
depth, the resulting liquidity is monotone and cannot respond to regime
changes such as shifts in order flow or inventory exposure. This
limitation motivates mechanisms that adapt liquidity online in a
state-dependent manner.

We propose an adaptive market maker that combines a family of cost
functions $\{C_k\}_{k=1}^M$, each representing a different liquidity
regime. At each round, the market uses a convex mixture of these costs
with time-varying weights, producing a single pricing rule. This
construction preserves the standard guarantees of cost-function markets,
while allowing liquidity to shift across regimes over time.

To guide this adaptation, we introduce a hybrid signal that combines two
economically meaningful quantities: slippage and liability. Slippage
captures the realized price impact of trades under a given liquidity
regime, and we center this term across experts to measure relative
execution quality in each round. Liability measures worst-case inventory
exposure and is left uncentered to retain its interpretation as an
absolute risk. Together, these terms encode a structural trade-off:
high-liquidity regimes reduce price impact, while low-liquidity regimes
limit risk. We use this signal within an online learning framework to
update the mixture weights over time.

We analyze the resulting mechanism and signal through regret minimization by
relating the learning objective to the behavior of the implemented
market. Our results show that the adaptive market tracks the performance
of the best sequence of $J$ liquidity regimes in hindsight relative to our signal, achieving
$O(\sqrt{T\log T})$ regret under mild burn-in assumptions. Simulations
illustrate how the mechanism responds to changing market conditions,
increasing liquidity in periods of sustained trading activity and
reducing it as inventory risk accumulates.

Our main contributions are as follows. We introduce an adaptive
cost-function market that combines liquidity regimes via learnable
weights while preserving no-arbitrage, bounded loss, expressiveness, and positive upside.
We propose a hybrid structural signal, combining centered slippage and
non-centered liability, that captures the trade-off between execution
quality and risk, and we provide regret guarantees for adaptive
liquidity selection. We further show that the realized market behavior
aligns with the learning objective up to controlled distortions arising
from pricing mismatch and weight updates. Finally, we demonstrate
state-dependent liquidity adaptation across regimes in simulation.

\paragraph{Related work.}
A large body of work studies liquidity in automated prediction markets,
including \citep{othman2011liquidity, othman2012profit, othman2013practical, li2013axiomatic},
which analyze its impact on pricing, efficiency, and profitability.
However, these works focus on fixed mechanisms and do not consider
adaptive approaches that combine multiple market makers. In particular,
they do not integrate efficient cost-function markets
\citep{abernethy2013efficient} with online learning over a family of
liquidity regimes, which is the focus of our approach.

Our mixture construction builds on the log-sum-exp (softmax) operator,
which underlies exponential weighting in online learning
\citep{freund1997decision, cesa2006prediction}, the log-partition
function in statistical models \citep{wainwright2008graphical, murphy2012machine},
and convex smoothing and geometric programming formulations
\citep{boyd2004convex, boyd2007tutorial}. We leverage these properties to
combine cost functions into a single convex potential with a natural
interpretation as expert aggregation.

Finally, our signal design is informed by the connection between
constant-function market makers and prediction markets
\citep{frongillo2024axiomatic}, as well as the decentralized finance
literature, where liquidity is characterized through slippage—the
sensitivity of prices to trades \citep{engel2022presentation}. This perspective motivates our use of slippage as a
core component of the learning signal.


 
\section{Background: Cost-Function Markets and Adaptive Liquidity}
\label{sec:background}

\subsection{Cost-Function Market Makers}

Let $\mathcal O := \{o_i\}_{i=1}^{d}$ be a finite outcome space, and let
$\rho(o)\in\mathbb{R}^d$ denote the payoff vector realized upon outcome
$o \in \mathcal O$. A trader holding inventory $q\in\mathbb{R}^d$
receives payoff $\rho(o)\cdot q$.
A cost-function market maker is specified by a convex, differentiable
function $C:\mathbb{R}^d\to\mathbb{R}$. When a trader submits a trade
$r\in\mathbb{R}^d$, the market maker charges
\[
\mathrm{Pay}(q,r)=C(q+r)-C(q),
\]
and updates the market state to $q \leftarrow q+r$. For any initial state
$q_0\in\mathbb{R}^d$ and sequence of trades $(r_t)_{t=1}^T$ with
$q_{t}=q_{t-1}+r_t$, payments telescope:
\[
\sum_{t=1}^T \mathrm{Pay}(q_{t-1},r_t)
=
C(q_T)-C(q_0).
\]

Following \cite{abernethy2013efficient}, desirable properties of a
cost-function market are captured by the following axioms.

\begin{axiom}[Existence of instantaneous prices]
\label{axiom:price}
$C$ is continuous and differentiable on $\mathbb{R}^d$.
\end{axiom}

\begin{axiom}[Information incorporation]
\label{axiom:info}
For all $q,r\in\mathbb{R}^d$, $\mathrm{Pay}(q+r,r)\ge \mathrm{Pay}(q,r)$. 
\end{axiom}

\begin{axiom}[No arbitrage]
\label{axiom:noarb}
For any trade sequence $s=(r_1,\dots,r_T)\in(\mathbb{R}^d)^*$ and initial
state $q_0\in\mathbb{R}^d$,
\[
\min_{o\in\mathcal O}
\Big\langle \rho(o),\, \sum_{i=1}^T r_i \Big\rangle
\le
\sum_{i=1}^T \mathrm{Pay}(q_{i-1},r_i).
\]
\end{axiom}

\begin{axiom}[Expressiveness]
\label{axiom:express}
For any $p\in\Delta_d := \{ p \in \reals_{+}^{d} \mid \|p\|_1 =1 \}$, we write $x^p := \mathbb{E}_{o\sim p}[\rho (o)]$. 
Then for any $p\in\Delta_d$ and $\epsilon>0$, there exists $q\in\mathbb{R}^d$
for which $\|\nabla C(q)-x^p\|<\epsilon$.
\end{axiom}
A general construction of cost functions satisfying these axioms is given
by the convex conjugate representation.

\begin{theorem}[\cite{abernethy2013efficient}, Thm.~4.2]
\label{thm:AVCconstruction}
Let $\mathcal{O}$ be a finite outcome space with bounded payoff set
$\rho(\mathcal O)\subset\mathbb{R}^d$. If $C:\mathbb{R}^d\to\mathbb{R}$ is
closed and satisfies Axioms~\ref{axiom:price}--\ref{axiom:express}, then
there exists a strictly convex function
$R:\mathbb{R}^d \to (-\infty,+\infty]$ such that
\begin{equation}
\label{eq:dualcost}
C(q)
= R^*(q) =
\sup_{x\in \relint(H(\rho(\mathcal O)))}
q\cdot x - R(x)
\end{equation}
where $H(\rho(\mathcal O)) := \textrm{conv}(\{\rho (o):o\in \mathcal{O}\})$ is the convex hull over the payout-space.
Conversely, any strictly convex function $R$ defined on
$\relint(H(\rho(\mathcal O)))$ induces a cost function $C=R^*$ that
satisfies Axioms~\ref{axiom:price}--\ref{axiom:express}.
\end{theorem}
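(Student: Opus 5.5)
The plan is to prove the two directions separately, using Fenchel--Moreno duality for the forward direction and the envelope (Danskin) formula for the converse. Throughout, write $\Pi := H(\rho(\mathcal O))$. It is the image of $\Delta_d$ under the linear map $p\mapsto x^p$, so it is compact and convex.

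\emph{Forward direction.} Since $C$ is closed, convex and finite on $\mathbb{R}^d$, Fenchel--Moreno gives $C = C^{**}$. We set $R := C^*$ and show three things about it.

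First, $\dom R \subseteq \Pi$. Suppose $x \notin \Pi$. Separate $x$ strictly from $\Pi$ by a direction $r$ with $r\cdot x > \max_o \rho(o)\cdot r$. No arbitrage (Axiom~\ref{axiom:noarb}) applied to the single trade $tr$ gives
\[
C(q_0+tr)-C(q_0) \ge t\min_o \rho(o)\cdot r .
\]
Applying the axiom to $-r$ as well yields the two-sided bounds
\[
t\min_o\rho(o)\cdot r \;\le\; C(q_0+tr)-C(q_0) \;\le\; t\max_o \rho(o)\cdot r ,
\]
the upper bound coming from the round trip $tr$ followed by $-tr$. Hence $R(x)\ge \sup_t\,[\,t\,r\cdot x - C(tr)\,] = +\infty$. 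Equivalently, the range of $\nabla C$ lies in $\Pi$, and $\dom C^* \subseteq \overline{\nabla C(\mathbb{R}^d)}$.

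Second, $\dom R \supseteq \relint \Pi$. Expressiveness (Axiom~\ref{axiom:express}) says the closure of the gradient range contains every $x^p$, so it contains $\Pi$. Since $\nabla C(q)\in\partial C(q)$, we have $\nabla C(\mathbb{R}^d)\subseteq \dom \partial C^* \subseteq \dom C^*$. Because $\dom C^*$ is convex with closure $\Pi$, its relative interior is $\relint \Pi$. Lower semicontinuity of $C^*$ then lets the supremum in \eqref{eq:dualcost} be restricted to $\relint \Pi$ without changing its value.

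Third, $R$ is strictly convex. Differentiability of $C$ (Axiom~\ref{axiom:price}) is dual to essential strict convexity of $C^*$ (Rockafellar, Thm.~26.3). Making this precise on the relative interior is, I expect, the main technical obstacle, together with the careful handling of boundary points in the domain step. If strict convexity fails on some segment, I would replace $R$ by its restriction to $\relint\Pi$ and invoke the essential strict convexity statement directly.

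\emph{Converse direction.} Take $R$ strictly convex on $\relint\Pi$ and put $C = R^*$, a supremum of affine functions; I will assume $R$ is also bounded below, say closed on the compact set $\Pi$.
\begin{itemize}
\item \emph{Finiteness and convexity.} Since $\Pi$ is bounded, $C$ is finite and convex.
\item \emph{Prices (Axiom~\ref{axiom:price}).} Strict convexity of $R$ makes the maximizer in \eqref{eq:dualcost} unique (using the closure of $R$ if needed). By Danskin's theorem $C$ is then differentiable, with $\nabla C(q)=x^*(q)\in\Pi$.
\item \emph{Information incorporation (Axiom~\ref{axiom:info}).} This is monotonicity of the gradient of a convex function:
\[
C(q+2r)-2C(q+r)+C(q)\ge 0 .
\]
\item \emph{No arbitrage (Axiom~\ref{axiom:noarb}).} The total payment telescopes to $C(q_0+s)-C(q_0)$ with $s=\sum_i r_i$. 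By the mean value theorem this equals $\nabla C(\xi)\cdot s$ for some $\xi$, and $\nabla C(\xi)\in\Pi$, so it is at least $\min_o \rho(o)\cdot s$.
\item \emph{Expressiveness (Axiom~\ref{axiom:express}).} Every $x\in\relint\Pi$ at which $\partial R(x)\neq\emptyset$ is attained as $\nabla C(q)$ for any $q\in\partial R(x)$. Such points are dense in $\relint\Pi$, and $\relint\Pi$ is dense in $\Pi\ni x^p$.
\end{itemize}
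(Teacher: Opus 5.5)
\textbf{The gap is the Prices step of your converse.} Strict convexity of $R$ on $\relint\Pi$ does not make the maximizer unique. The supremum over $\relint\Pi$ need not be attained there, and $\mathrm{cl}\,R$ can be affine on a boundary face even when $R$ extends continuously to $\Pi$. So your added closedness assumption does not help.

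Here is a counterexample. Take three outcomes with payoff vectors $(\pm\tfrac12,0)$ and $(0,1)$ (an affine image of the Arrow--Debreu simplex), and let $R(x)=x_2^2(1+x_1^2)$. Its Hessian has positive diagonal and determinant $4x_2^2(1-3x_1^2)>0$ on $\relint\Pi$, so $R$ is strictly convex there. Yet its continuous extension vanishes identically on the edge $x_2=0$. Since $|x_1|\le\tfrac12(1-x_2)$ on $\Pi$,
\[
R^*(\epsilon,-1)=\sup_{x\in\relint\Pi}\bigl\{\epsilon x_1-x_2-x_2^2(1+x_1^2)\bigr\}=\tfrac{|\epsilon|}{2}\qquad\text{for all }\epsilon\in\R,
\]
so $C=R^*$ is not differentiable at $(0,-1)$ and Axiom~\ref{axiom:price} fails.

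The converse as stated is therefore false, and this step cannot be closed without an extra hypothesis. Two options: assume $\mathrm{cl}\,R$ is strictly convex on all of $\Pi$ (as negative entropy is), or assume $R$ is steep at the relative boundary so that maximizers lie in $\relint\Pi$. Under such a hypothesis, argue that $\partial C(q)=\arg\max_{x\in\Pi}\{q\cdot x-\mathrm{cl}\,R(x)\}$ (Rockafellar, Thm.~23.5) is a singleton, hence $C$ is differentiable (Thm.~25.1). Danskin's theorem in its usual form needs a compact index set and an objective continuous in $x$, which you do not have on $\relint\Pi$.

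The rest of your converse is fine. No arbitrage does not even need the mean value theorem: $C(q_0+s)-C(q_0)\ge x\cdot s\ge\min_o\rho(o)\cdot s$ for any $x\in\partial C(q_0)\subseteq\Pi$.

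\textbf{Forward direction.} The paper gives no proof of this statement; it is imported from Abernethy, Chen and Wortman Vaughan. Your forward direction is the standard duality argument and is correct. The step you call the main obstacle is short. $C$ is finite and differentiable, hence essentially smooth, so $C^*$ is essentially strictly convex (Rockafellar, Thm.~26.3), i.e.\ strictly convex on convex subsets of $\dom\partial C^*$. Moreover $\relint\Pi=\relint(\dom C^*)\subseteq\dom\partial C^*$ (Thm.~23.4). Your fallback of restricting $C^*$ to $\relint\Pi$ is genuinely needed, since $C^*$ can be affine on boundary faces where it has no subgradients.

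Two small repairs:
\begin{enumerate}
\item Restricting the supremum to $\relint\Pi$ uses convexity along segments, $C^*((1-\lambda)x_0+\lambda x)\le(1-\lambda)C^*(x_0)+\lambda C^*(x)$ with $x_0\in\relint\Pi$. It does not follow from lower semicontinuity, which gives the inequality in the wrong direction.
\item Convexity of $C$ must be derived or taken from the paper's standing assumption. Axiom~\ref{axiom:info} is midpoint convexity, which together with continuity gives convexity.
\end{enumerate}
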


Besides the mentioned Axioms 1-4, it is also desirable that the market has a bounded worst-case loss. 

\begin{definition}[Worst-case loss]
\label{def:wcl}
Let $\Phi_T(o)$ denote the total payout owed by the mechanism at time $T$
under outcome $o \in \mathcal O$. For a sequence of trades $(r_t)_{t=1}^T$, the worst-case loss under outcome $o$ is
$\Loss_T(o)
:=
\Phi_T(o)
-
\sum_{t=1}^T \Pay_t(r_t)$.
\end{definition}

It so happens that a cost-function market maker constructed via Theorem \ref{thm:AVCconstruction} does have a bounded worst-case loss.
This is shown in Theorem \ref{prop:wcl-static}, which is simply true via Fenchel-Young inequality.

\begin{theorem}[Worst-case loss for a cost-function market]
\label{prop:wcl-static}
Let \(C:\mathbb{R}^d\to\mathbb{R}\) be a cost-function market constructed via Theorem \ref{thm:AVCconstruction}, with payment rule
$\Pay_t(r_t)=C(q_t)-C(q_{t-1})$.
Then for any horizon \(T\) and outcome \(o\in\mathcal O\),
\[
\Loss_T(o)
=
\rho(o)\cdot(q_T-q_0)-\sum_{t=1}^T \Pay_t(r_t)
\le
C^*(\rho(o)) + C(q_0)-\rho(o)\cdot q_0 .
\]
Consequently,
\[
\sup_{T,(r_t),o}\Loss_T(o)
\le
\max_{o\in\mathcal O} C^*(\rho(o))
+
C(q_0)-\min_{o\in\mathcal O}\rho(o)\cdot q_0 .
\]
In particular, if \(q_0=0\) and \(C(0)=0\), then $\sup_{T,(r_t),o}\Loss_T(o)
\le
\max_{o\in\mathcal O} C^*(\rho(o))$.
\end{theorem}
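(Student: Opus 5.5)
The argument has two ingredients: the payments telescope, and the Fenchel--Young inequality for the pair $(C, C^*)$.

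\emph{Step 1: rewrite the loss.} The trader's inventory accumulated since the start is $q_T-q_0=\sum_t r_t$. The mechanism therefore owes $\Phi_T(o)=\rho(o)\cdot(q_T-q_0)$, which gives the first equality in the statement. Since $\Pay_t(r_t)=C(q_t)-C(q_{t-1})$, the sum of payments telescopes as in Section~\ref{sec:background}, so
\[
\Loss_T(o)=\rho(o)\cdot q_T - C(q_T) + C(q_0) - \rho(o)\cdot q_0 .
\]

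\emph{Step 2: Fenchel--Young.} By definition of the conjugate, $C^*(y)=\sup_q\, y\cdot q - C(q)$. Hence for every $q$ we have $\rho(o)\cdot q - C(q)\le C^*(\rho(o))$. Applying this with $q=q_T$ bounds the first two terms by $C^*(\rho(o))$, which is the per-outcome bound. It is uniform in $T$ and in the trade sequence because $q_T$ has been eliminated.

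It remains to check that $C^*(\rho(o))$ is finite. This is the one point needing care, since $\rho(o)$ lies on the boundary of $H(\rho(\mathcal O))$ rather than in its relative interior. By Theorem~\ref{thm:AVCconstruction}, $C=R^*$ with $R$ supported on $\relint(H)$. Because $C$ is closed, $C^*=R^{**}$ equals the closure of $R$ (extended to $+\infty$ outside the hull). Thus $C^*(\rho(o))$ is the lower-semicontinuous extension of $R$ at the vertex $\rho(o)$.

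If this value were $+\infty$, the bound would still hold but would be vacuous. The statement's "bounded" conclusion therefore implicitly requires $R$ to have finite boundary limits, which holds for standard choices such as the negative entropy underlying LMSR. I would note this assumption explicitly rather than try to derive it.

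\emph{Step 3: take the supremum.} Maximizing over $o$, the terms separate. The first is at most $\max_o C^*(\rho(o))$, and the term $-\rho(o)\cdot q_0$ is at most $-\min_o \rho(o)\cdot q_0$. Splitting the maximum this way gives a valid, slightly loose, upper bound. The specialization to $q_0=0$ and $C(0)=0$ is immediate.
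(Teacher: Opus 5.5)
Your proof is correct and is essentially the paper's argument: the paper simply invokes telescoping plus the Fenchel--Young inequality $\rho(o)\cdot q_T - C(q_T)\le C^*(\rho(o))$, which is the same split into a term at $q_T$ and an initial-state term that it uses for the mixed-market loss bound. Your remark that $C^*(\rho(o))=(\mathrm{cl}\,R)(\rho(o))$ can be $+\infty$ (e.g.\ for a barrier such as $R(x)=-\sum_i\log x_i$ on the simplex) is a valid refinement the paper glosses over. The inequality still holds in that case, but the paper's informal claim that every such market has bounded loss needs $R$ to remain bounded up to the relative boundary of the hull.
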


\begin{assumption}[cost-function construction]\label{ass:cost}
Throughout, we assume that the cost functions
$\{C_k : \mathbb{R}^d \to \mathbb{R}\}_{k=1}^M$ satisfy
Axioms~\ref{axiom:price}--\ref{axiom:express} with respect to a common
outcome space $\mathcal{O}$ and bounded payoff map
$\rho:\mathcal{O}\to\mathbb{R}^d$. Moreover, each $C_k$ admits a dual
representation $C_k = R_k^*$, where $R_k$ is strictly convex on
$\relint(H(\rho(\mathcal O)))$.
\end{assumption}

\subsection{Liquidity, Price Impact, and Market Objectives}

Liquidity governs how trades move prices. For a trade $r$ at state $q$,
the payment admits the decomposition 
\[
C(q+r)-C(q)
=
\nabla C(q)\cdot r
+
D_C(q+r,q),
\]
where $D_C$ is the Bregman divergence induced by $C$. The linear term
prices the trade at current marginal prices, while the divergence
captures price impact arising from curvature.

Thus, liquidity is encoded by curvature: higher curvature increases
price impact (lower liquidity), while flatter curvature reduces price
movement but increases exposure to inventory risk. A standard mechanism
for controlling liquidity is scaling via the perspective transform,
\[
C_\eta(q)=\eta\, C\!\left(\frac{q}{\eta}\right),
\qquad \eta>0,
\]
where larger $\eta$ corresponds to higher liquidity and smaller $\eta$
to lower liquidity. 

\subsection{Adaptive Liquidity as a Learning Problem}
\label{sec:adaptive-regret}

In practice, a market designer may wish to vary liquidity over time in
response to changing market conditions. For example, periods of heavy
trading may favor deeper liquidity to limit price impact, while quieter
periods may favor higher curvature to control inventory risk. 
We model this by introducing a family of candidate market makers
$\{C_k\}_{k=1}^M$, each representing a distinct liquidity regime, and
combining them using time-varying weights. This yields an online learning
formulation in which each regime acts as an expert and the weights are
updated based on observed outcomes.
The next section introduces our mechanism for combining these
cost-function markets.

\section{Adaptive Liquidity via Cost-Function Mixtures}
\label{sec:ensemble-diagonal}
Given cost functions $\{C_k : \mathbb{R}^d \to \mathbb{R}\}_{k=1}^M$ and weights $w \in \Delta_M $, define the mixed potential
\[
C^{\mix}(q;w)
=
\frac{1}{\beta}
\log\!\left(
\sum_{k=1}^M
w(k)\,e^{\beta C_k(q)}
\right),
\qquad \beta>0.
\]

This construction aggregates multiple liquidity regimes into a single convex potential.
Convexity in $q$ follows since the log-sum-exp map is convex and coordinatewise
nondecreasing, and each $C_k$ is convex.
The parameter $\beta$ controls how the mixture interpolates between regimes:
small $\beta$ yields near averaging, while large $\beta$ approaches a soft maximum.
The gradient admits the form
\[
\nabla C^{\mix}(q;w)
=
\sum_{k=1}^M
\pi_k(q;w)\,
\nabla C_k(q),
\qquad
\pi_k(q;w)
=
\frac{w(k)e^{\beta C_k(q)}}{\sum_j w(j)e^{\beta C_j(q)}}
\in \Delta_M,
\]
so prices are a soft aggregation of expert prices, while liquidity is governed
by the curvature $\nabla^2 C^{\mix}(q;w)$.
We define the payment rule
\[
\Pay_t^{\mathrm{mix}}(r_t)
=
C^{\mix}(q_t;w_{t+1})
-
C^{\mix}(q_{t-1};w_t)
+
\mathrm{fee}(q_{t-1},q_t,w_t,w_{t+1}),
\]
where the fee term is nonnegative and will be specified later.

The mechanism maintains a single global inventory $q_t \in \mathbb{R}^d$
while adapting liquidity through time-varying weights.
At each round $t$, trading proceeds under the potential
$C^{\mix}(\cdot; w_t)$, and the system evolves as follows:
\begin{enumerate}
\item The controller selects weights $w_t \in \Delta_M$.
\item A trader submits a trade $r_t$.
\item The inventory updates to $q_t = q_{t-1} + r_t$.
\item The weights are updated to $w_{t+1}$ based on observed feedback.
\item The payment $\Pay_t^{\mathrm{mix}}(r_t)$ is charged.
\end{enumerate}

Appendix~\ref{app:examples} provides examples showing how this construction
can be applied to standard LMSR markets and permutation-based pair-betting markets.



\section{Axioms Verification and Properties}

We verify that the adaptive mixture mechanism preserves the core economic
guarantees of cost-function markets under time-varying weights.
The key challenge is that weight updates change the underlying potential
across rounds, breaking standard telescoping arguments.
We address this via an explicit fee that compensates for switching costs,
and then establish no-arbitrage, bounded worst-case loss, expressiveness, and positive upside.
All proofs are deferred to Appendix~\ref{app:sec4}.

For this current section, we make the following assumption with respect to weights. 
\begin{assumption}[full support]
 Let $(w_t)_{t\ge1}$ be a sequence of weights in $\Delta_M$. 
We assume that weights have full support:
\begin{equation}\label{eq:full-support-aug}
w_t(k)>0
\qquad\forall\, t\ge 1,\ \forall\, k\in[M].
\end{equation}   
\end{assumption}
This assumption holds, for example, under multiplicative-weights updates.


\subsection{Fees and Switching Costs}
\label{sec:fees}

The payment rule for the mixed market includes an additive fee:
\[
\Pay_t^{\mathrm{mix}}(r_t)
=
C^{\mix}(q_t;w_{t+1})
-
C^{\mix}(q_{t-1};w_t)
+
\mathrm{fee}_t,
\]
where $\mathrm{fee}_t := \mathrm{fee}(q_{t-1},q_t,w_t,w_{t+1}) \ge 0$
compensates for changes in the potential due to weight updates.
We treat the fee abstractly, with the goal of preserving economic
properties such as no-arbitrage.
When weights change from $w_t$ to $w_{t+1}$, the potential
$C^{\mix}(\cdot;w)$ shifts. To compensate for worst-case decreases,
define the switch budget
\begin{equation}
\Delta_t^\star
:=
\sup_{q\in\mathbb{R}^d}
\big(
C^{\mix}(q;w_t)
-
C^{\mix}(q;w_{t+1})
\big).
\label{eq:switchpay}
\end{equation}
Since this quantity may be negative if the new potential dominates everywhere,
a natural choice of fee is its positive part,
\[
\mathrm{fee}_t = [\Delta_t^\star]_+ := \max\{\Delta_t^\star,0\}.
\]
This is the smallest nonnegative $\delta_t$ such that
\[
C^{\mix}(q;w_t)
\le
C^{\mix}(q;w_{t+1}) + \delta_t
\qquad \forall q.
\]
Thus, $[\Delta_t^\star]_+$ is a uniform \emph{insurance premium} against
worst-case decreases in the potential.
For a more in depth discussion regarding alternative fees and their tradeoffs, we refer the reader to Appendix \ref{app:fees}.

\subsection{No-Arbitrage under Dynamic Weights}
\label{sec:aug-master-noarb}

We first show that $\Pay_t^{\mathrm{mix}}$ satisfies no arbitrage. 
\begin{restatable}{theoremc}{noarb}
\label{thm:seq-noarb-dyn-mix}
Under Assumption~\ref{ass:cost} on $\{C_k\}_{k=1}^{M}$,
$\Pay_t^{\mathrm{mix}}$ with $\mathrm{fee}_t = [\Delta_t^\star]_+$ satisfies no-arbitrage.
\end{restatable}
Without compensation, changes in weights can decrease the potential between rounds, allowing traders to exploit the shift for risk-free profit. The fee term prevents this by ensuring each payment dominates a valid cost-function payment under the updated weights, so that, from the trader’s perspective, the market behaves like a sequence of valid cost-function markets stitched together.


\subsection{Bounded Worst-Case Loss}
\label{sec:wcl-switch-budget}

We next establish that adaptive mixtures inherit bounded worst-case loss from their constituent experts.
Adaptive updates change the pricing surface over time, so losses no
longer telescope directly as in static markets. In principle, repeated
weight changes could accumulate unbounded liability if not controlled.

\begin{restatable}{theoremc}{bwcl}
\label{thm:seq-bwcl-dyn-mix}
Under Assumption~\ref{ass:cost} on \(\{C_k\}_{k=1}^M\), for each $k\in [M]$ let $B_k := \max_{o\in\mathcal O} C_k^*(\rho(o))$. 
Then $\Pay_t^{\mathrm{mix}}$ with \(\mathrm{fee}_t=[\Delta_t^\star]_+\)
has bounded worst-case loss. In particular,
\[
\sup_{T,(r_t),o}\Loss_T(o)
\le
\max_{k\in[M]} B_k
+
\frac{1}{\beta}\log M
+
C^{\mix}(q_0;w_1)
-
\min_{o\in\mathcal O}\rho(o)\cdot q_0.
\]
\end{restatable}

The bound shows that the adaptive market inherits the worst-case loss
of the most conservative expert, up to an additive penalty of
$\frac{1}{\beta}\log M$. This term arises from the log-sum-exp
aggregation and reflects the cost of maintaining flexibility across
multiple liquidity regimes.
In particular, when $M$ is moderate and $\beta$ is not too small, this
overhead is negligible compared to the scale of the underlying market.


\subsection{Expressiveness of the Mixed Potential}

A prediction market must be able to represent arbitrary beliefs over
outcomes. If the adaptive mechanism restricted the range of achievable
prices, it would limit its usefulness as an information aggregation tool.

\begin{restatable}{theoremc}{expressivness}
\label{thm:expressive-master}
Under Assumption~\ref{ass:cost}, suppose the experts $\{C_k\}_{k=1}^M$
are generated from a common base cost $C$ via the scaled-family
construction $C_{\eta_k}(q) := \eta_k\,C(q/\eta_k)$.
Then the mixed potential $C^{\mix}(q;w)
:=
\frac{1}{\beta}\log\!\Big(\sum_{k=1}^M w(k)e^{\beta C_{\eta_k}(q)}\Big)$
is expressive on \(H(\rho(\mathcal O))\).
\end{restatable}

Despite combining multiple liquidity regimes, the mixture preserves the
full expressive power of the underlying cost-function family. In
particular, any interior belief can still be implemented as a market
state. Thus, adaptivity affects only the curvature (liquidity), not the
set of attainable prices.



\subsection{Positive Upside and Participation}
\label{sec:positive-upside}

In addition to no-arbitrage and bounded loss, a well-functioning market
should preserve participation incentives: traders who take directional
positions should be able to obtain strictly positive \emph{net profit}
when the corresponding outcome occurs.

Let $p_t
:=
\nabla C^{\mix}(q_{t-1};w_t)
\in
\relint H(\rho(\mathcal O))$
denote the current price vector. For a feasible direction \(v\), the
realized profit in outcome \(o\) from a trade \(r=s v\) is
$\rho(o)\cdot(sv)-\Pay_t^{\mix}(sv)$.

\begin{definition}[Positive directional upside]
\label{def:positive-upside}
The market satisfies \emph{positive directional upside} at round \(t\)
if, for every outcome \(o\in\mathcal O\) and every feasible direction
\(v\) satisfying
$(\rho(o)-p_t)\cdot v>0$, there exists a nontrivial interval $(s_{\min},s_{\max})$,
$0\le s_{\min}<s_{\max}$,
such that
\[
\rho(o)\cdot(sv)-\Pay_t^{\mix}(sv)>0
\qquad
\forall\, s\in(s_{\min},s_{\max}).
\]
\end{definition}

This definition allows profitable trades to occur away from the origin.
In particular, fixed or state-dependent fees may eliminate
profitability for very small trades, while curvature effects eventually
dominate sufficiently large trades. The economically relevant
requirement is therefore the existence of a \emph{nontrivial profitable
interval} rather than positivity only in the infinitesimal limit.

The following result shows that adaptive updates preserve such a
profitable region when the update-induced distortions remain controlled
relative to the directional payoff advantage.

\begin{restatable}[Positive directional upside under stable updates]{theorem}{positiveupside}
\label{thm:positive-upside}
Suppose the mixed cost \(C^{\mix}(\cdot;w_t)\) is locally smooth in
\(q\). Furthermore, assume that the weight-update dynamics and
adaptation fees remain sufficiently controlled relative to the
directional payoff advantage. Then \(\Pay_t^{\mix}\) admits positive
directional upside.
\end{restatable}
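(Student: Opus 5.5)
I plan to write the realized profit of the trade $r=sv$ in outcome $o$ as a smooth main term plus update-induced distortion terms, and then show that on some nontrivial interval of $s$ the main term dominates. Set $q=q_{t-1}$. Adding and subtracting $C^{\mix}(q+sv;w_t)$ in the payment rule gives
\[
\Pay_t^{\mix}(sv)
=
\big[C^{\mix}(q+sv;w_t)-C^{\mix}(q;w_t)\big]
+
E_t(s),
\]
where
\[
E_t(s):=C^{\mix}(q+sv;w_{t+1})-C^{\mix}(q+sv;w_t)+\fee_t .
\]
Write $g(s):=C^{\mix}(q+sv;w_t)-C^{\mix}(q;w_t)$. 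Local smoothness then gives the expansion $g(s)=s\,p_t\cdot v+R(s)$, where $R(s)=D_{C^{\mix}(\cdot;w_t)}(q+sv,q)\ge 0$ is the Bregman remainder from the decomposition in Section~\ref{sec:background}. By local smoothness, $0\le R(s)\le \tfrac{L}{2}s^2\|v\|^2$ for $s\in[0,s_0]$, with some constant $L$ and radius $s_0>0$. The profit therefore equals
\[
s\,\alpha - R(s) - E_t(s),
\qquad \alpha:=(\rho(o)-p_t)\cdot v>0 .
\]

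I will make the informal hypothesis ``updates and fees are sufficiently controlled relative to the directional payoff advantage'' precise as follows. There are constants $\gamma\in[0,1)$ and $\varepsilon\ge 0$ such that $E_t(s)\le \gamma\alpha s+\varepsilon$ for all $s\in[0,s_0]$, and moreover $\varepsilon$ is small compared with $\alpha^2/L$, the ratio that sets the scale. This form covers two natural mechanisms. The first is a fixed switching premium, represented by $\varepsilon\approx[\Delta_t^\star]_+$. The second is a weight change whose effect on the potential grows at most linearly along $v$; since $\nabla_q C^{\mix}(\cdot;w)$ depends continuously on $w$, the difference $C^{\mix}(q+sv;w_{t+1})-C^{\mix}(q+sv;w_t)$ is controlled by its value at $s=0$ plus $s$ times the gap between the two price vectors. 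Under this hypothesis the profit is at least
\[
h(s):=(1-\gamma)\alpha s-\tfrac{L}{2}\|v\|^2 s^2-\varepsilon .
\]

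The function $h$ is a concave quadratic with maximum $\frac{(1-\gamma)^2\alpha^2}{2L\|v\|^2}-\varepsilon$, attained at $s^\ast=\frac{(1-\gamma)\alpha}{L\|v\|^2}$. If $\varepsilon$ is strictly below this maximum, and if $s^\ast\le s_0$ (otherwise I shrink the analysis to $[0,s_0]$ and compare $\varepsilon$ with $h$ there instead), then $h$ has two real roots $0\le s_{\min}<s_{\max}$. On $(s_{\min},s_{\max})$ we get $h>0$, and hence the profit is strictly positive. When $\varepsilon=0$ we have $s_{\min}=0$, which recovers the usual infinitesimal-profit statement. I take this as the precise content of the theorem.

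I expect the main obstacle to be the bound on $E_t(s)$. Here $w_{t+1}$ may itself depend on the trade $r_t=sv$, since the weights are updated after observing the feedback. The term $C^{\mix}(q+sv;w_{t+1}(s))-C^{\mix}(q+sv;w_t)$ therefore has to be controlled uniformly in $s$, not just at a single point. I would handle this using the log-sum-exp structure. Because the mixture satisfies
\[
\big|C^{\mix}(x;w')-C^{\mix}(x;w)\big|\le \tfrac1\beta\max_k\big|\log\big(w'(k)/w(k)\big)\big|,
\]
a stability condition on the multiplicative-weight ratios yields $\varepsilon$ directly. The same bound also controls $\fee_t$, since $\fee_t$ is a supremum of such differences. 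If needed, I would fold this whole contribution into $\varepsilon$ and set $\gamma=0$.
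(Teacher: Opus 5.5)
Your proposal is correct and follows essentially the paper's route: your $E_t(s)$ is exactly the paper's $\Delta_t^{\dyn}(sv)+\fee_t$. The paper likewise assumes a polynomial stability bound $F_{0,t}+F_{1,t}\|v\|s+F_{2,t}\|v\|^2s^2$, which matches your $\varepsilon+\gamma\alpha s$ once the quadratic term is absorbed into $L$, and it concludes from the discriminant conditions $A>0$ and $A^2>4BF_{0,t}$ on the resulting concave quadratic, which correspond to your $\gamma<1$ and $\varepsilon<\max h$. Your attention to the smoothness radius $s_0$ is more careful than the paper, which never checks that $s_{\max}$ lies where the Taylor bound holds; state the final interval as $(s_{\min},\min\{s_{\max},s_0\})$, which is nonempty because $s_{\min}<s^\ast\le s_0$.
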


Theorem~\ref{thm:positive-upside} shows that realized profit is governed
by a competition between the linear payoff advantage
$(\rho(o)-p_t)\cdot v$
and the curvature and adaptation effects induced by the adaptive
mechanism. Fixed update and fee terms may introduce a minimum profitable
trade size, while curvature and dynamic distortions limit the maximum
profitable trade size. When the directional payoff advantage dominates
these effects, traders retain a nontrivial range of profitable trades.

Appendix~\ref{app:positive-upside-general} specializes this result to
LMSR mixtures. Corollary~\ref{cor:lmsr-positive-upside-general} shows
that, under stable adaptive updates, any directional payoff advantage
sufficiently separated from the current market price yields a nontrivial
interval of profitable trade sizes. In particular, for Arrow--Debreu
securities, any outcome \(o\) with \(p_t(o)<1\) admits a range of
purchase sizes for which buying additional \(o\)-shares yields strictly
positive realized profit when \(o\) occurs. The lower endpoint reflects
fixed update and fee effects, while the upper endpoint is determined by
curvature and adaptation costs. Thus, adaptive liquidity updates
preserve trader incentives over a nontrivial range of trade sizes rather
than only infinitesimally.

\section{Learning Hybrid Market Objectives}

We now analyze how learning over liquidity regimes translates into guarantees for the realized market. The challenge is that the learner optimizes a surrogate objective over experts, while the implemented market uses a nonlinear mixture with time-varying weights. Our strategy is therefore to relate these quantities through a sequence of comparisons, isolating the distortions introduced by mixing and adaptation. Appendix \ref{app:slippage-liability} shows that these distortions are controlled.

Let $\{C_k\}_{k=1}^M$ denote the candidate liquidity regimes. For a trade $r_t$ with $q_t = q_{t-1} + r_t$, define the slippage
\[
S_{k,t}
=
D_{C_k}(q_t, q_{t-1})
=
C_k(q_t) - C_k(q_{t-1}) - \nabla C_k(q_{t-1}) \cdot r_t,
\]
and the liability $L_k(q)
=
\max_{o\in\mathcal O}
\rho(o)\cdot q - \bigl(C_k(q) - C_k(0)\bigr)$.
The hybrid signal of expert $k$ is
\[
\Gamma^{\hyb}_{k,t}
=
a\Bigl(
S_{k,t}
-
\frac{1}{M}\sum_{j=1}^M S_{j,t}
\Bigr)
+
b L_k(q_t),
\qquad a,b>0.
\]
This signal encodes the liquidity trade-off: higher-liquidity regimes reduce slippage but increase inventory exposure, while lower-liquidity regimes do the opposite.

Given weights $w_t \in \Delta^M$, define the surrogate signal
$\Gamma^{\surr}_t
:=
\sum_{k=1}^M w_t(k)\Gamma^{\hyb}_{k,t}$.
We compare against the best sequence of regimes with at most $J$ switches, where for a sequence $j_{1:T}$,
$S(j_{1:T})
:=
\bigl|\{t\in\{2,\dots,T\}: j_t \neq j_{t-1}\}\bigr|$.
We assume the learner achieves sublinear switching regret
\[
R_T^{\surr}
:=
\sum_{t=1}^T \Gamma^{\surr}_t
-
\min_{j_{1:T}:S(j_{1:T})\le J}
\sum_{t=1}^T \Gamma^{\hyb}_{j_t,t}.
\]
This isolates the learning problem: we only require that the surrogate objective is optimized. 
We next relate this surrogate quantity to the mixed signal induced by the market, since this is the quantity that governs how trades are actually priced under the current mixture. 
By then incorporating weight updates and fees to obtain the realized signal, we show that good performance under the surrogate objective translates—up to controlled distortions—into good performance of the implemented mechanism.

The market evaluates trades under the mixed potential
$C_{\mix,t}(\cdot) := C_{\mix}(\cdot; w_t)$, inducing
\[
\Gamma^{\hyb}_{\mix,t}
=
a\Bigl(
D_{C_{\mix,t}}(q_t, q_{t-1})
-
\frac{1}{M}\sum_{j=1}^M S_{j,t}
\Bigr)
+
b L_{\mix,t}(q_t).
\]
To relate learning to the market, we compare $\Gamma^{\hyb}_{\mix,t}$ to $\Gamma^{\surr}_t$.
Since pricing uses posterior weights $\pi_t(\cdot;q)$ instead of $w_t$, the two differ. Theorem \ref{thm:hybrid-bridge-slippage} gives
\[
\Gamma^{\hyb}_{\mix,t}
\le
\Gamma^{\surr}_t
+
a (\pi_t - w_t)\cdot S_t
+
a \epsilon^{\text{drift}}_t.
\]

The first error term captures the \emph{pricing–learning mismatch}: the learner updates weights $w_t$, but prices are determined by the posterior weights $\pi_t$, which depend on the current state $q_t$. The second term $\epsilon^{\text{drift}}_t$ captures the fact that these posterior weights themselves change along the trade from $q_{t-1}$ to $q_t$. In other words, slippage is evaluated at $q_t$, while prices are determined at $q_{t-1}$, and the mixture weights shift between these two points. This produces a second-order correction term, which is small when trades are bounded and the cost functions are smooth.

The realized signal accounts for weight updates and fees:
\[
\Gamma^{\Phi,\hyb}_t
=
\Gamma^{\hyb}_{\mix,t}
+
a\bigl(C_{\mix}(q_t; w_{t+1}) - C_{\mix}(q_t; w_t)\bigr)
+
a\,\mathrm{fee}_t.
\]
Comparing this to $\Gamma^{\hyb}_{\mix,t}$ isolates implementation effects due to updating the mixture weights and compensating for switching costs. Theorem \ref{thm:transfer-clean} yields
\begin{align*}
\sum_{t=1}^T \Gamma^{\Phi,\hyb}_t
-
\sum_{t=1}^T \Gamma^{\hyb}_{j_t,t}
\le
R_T^{\surr}
&+
a \sum_{t=1}^T |(\pi_t - w_t)\cdot S_t|
\\
&+
a \sum_{t=1}^T \epsilon^{\text{drift}}_t
+
a \sum_{t=1}^T
\bigl(
\nabla_w C_{\mix}(q_t; w_t)\cdot (w_{t+1} - w_t)
\bigr)
+
a \sum_{t=1}^T \mathrm{fee}_t.
\end{align*}

Theorem \ref{thm:actual-hybrid-regret} shows that if $R_T^{\surr}$ is small and the update and fee terms are controlled, then
\[
\sum_{t=1}^T \Gamma^{\Phi,\hyb}_t
-
\sum_{t=1}^T \Gamma^{\hyb}_{j_t,t}
\le
R_T^{\surr}
+
a \sum_{t=1}^T |(\pi_t - w_t)\cdot S_t|
+
a \sum_{t=1}^T \epsilon^{\text{drift}}_t
+ C_w + C_f.
\]

Under dominance (Corollary \ref{cor:dominant-clean}), the mismatch terms vanish outside a burn-in set $B$, yielding
\[
\sum_{t=1}^T \Gamma^{\Phi,\hyb}_t
-
\sum_{t=1}^T \Gamma^{\hyb}_{j_t,t}
\le
R_T^{\surr}
+
4aGR C_{\text{dom}}
+
O(|B|).
\]
Thus, when regimes are well-separated, the adaptive market tracks the best sequence in hindsight and average regret vanishes.

\section{Simulation Study}
\label{sec:simulations}

\begin{figure}[!]
    \centering
    \includegraphics[scale=.47]{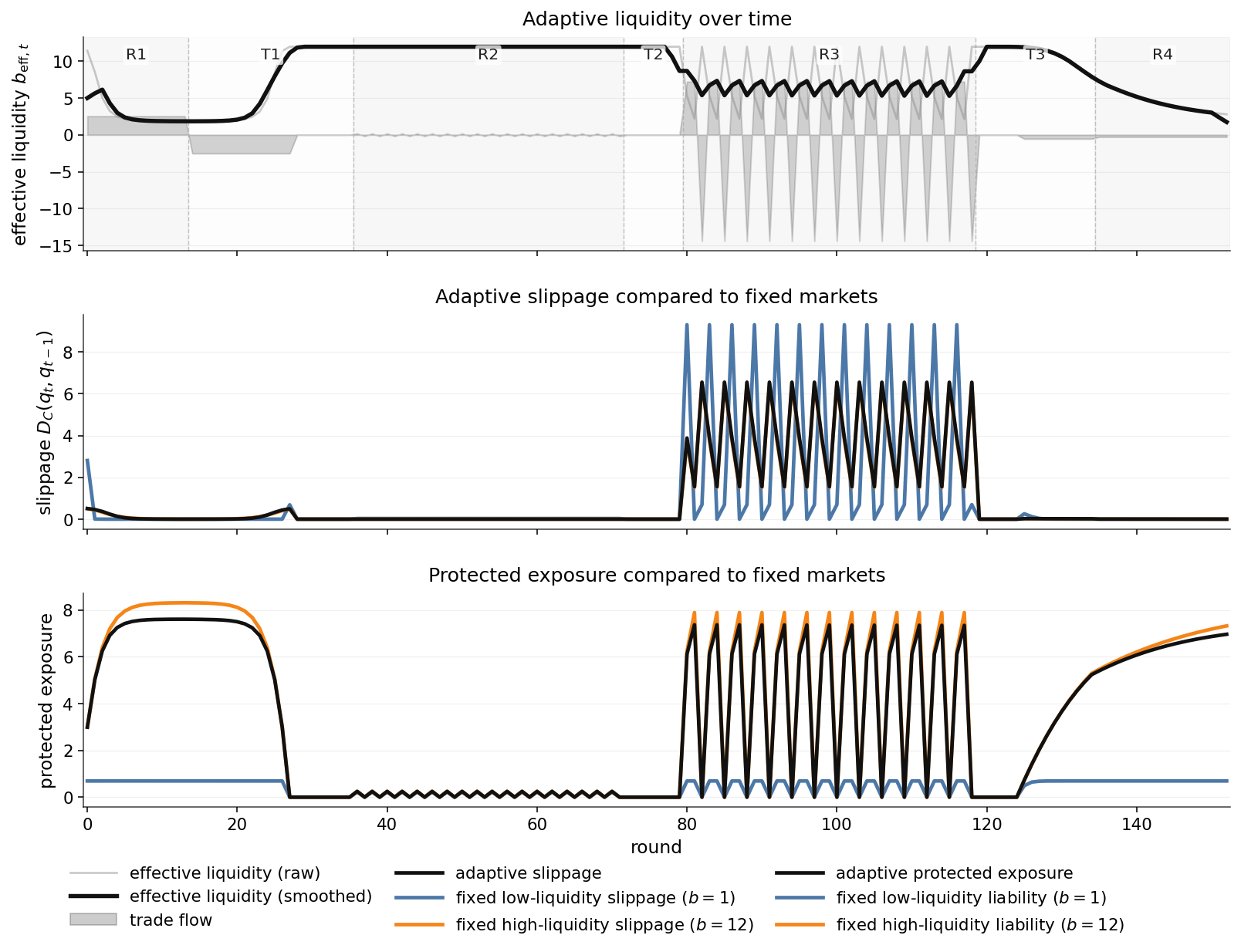}
\caption{Adaptive liquidity under regime changes. Top: effective liquidity
$b_{\mathrm{eff},t}
:=
\big(\sum_k \pi_t^{(b)}(k;q_t)\,\eta_k^{-1}\big)^{-1}$,
the curvature-matched scale of the mixture under pricing weights. Middle: mixture slippage
$D_{C_{\mix,t}}(q_t,q_{t-1})$ compared to fixed low- and high-liquidity markets. Bottom: mixture liability $L_{\mix,t}(q_t)$ compared to fixed markets, measuring worst-case
inventory exposure.
}
\label{fig:liquidity}
\end{figure}

We illustrate the adaptive mechanism in a two-outcome market with two
LMSR experts representing low and high liquidity. The learner updates
weights via Fixed-Share using the hybrid signal from Section~5, where
slippage \(S_{k,t}=D_{C_k}(q_t,q_{t-1})\) captures trading cost and
\(L_k(q_t)\) measures inventory risk --- detailed setup of the experiments can be found in Appendix ~\ref{app:code}\footnote{Code can be found at: \url{https://github.com/EnriqueNueve/Adaptive-Cost-Function-Liquidity-/tree/main}}.

The experiment is structured into four regimes designed to separate
price-impact and inventory effects.
Regimes R1 and R4 generate sustained
directional flow, leading to monotone inventory accumulation and hence
risk-dominated behavior. Regime R2 consists of small oscillatory trades
near neutral inventory, isolating the slippage component. Regime R3
captures price discovery with large alternating trades: flow is high,
but inventory remains bounded due to frequent reversals.

Figure~\ref{fig:liquidity} highlights three key behaviors.

First, effective liquidity adapts sharply to regime changes rather than
gradually. In R1, liquidity drops quickly to a low level after an initial
adjustment, reflecting rapid accumulation of inventory risk. During the
transition T1, liquidity increases abruptly and stabilizes at a high
level throughout R2, where inventory remains negligible. This plateau
indicates that when liability is inactive, the learner selects a
high-liquidity regime to minimize slippage.

Second, the comparison with fixed markets clarifies how adaptation
interpolates between extremes. In the slippage panel, the adaptive curve
tracks the low-liquidity market when trades are small (R2), but moves
closer to the high-liquidity market during large trades (R3). In R3,
slippage exhibits strong periodic spikes driven by alternating order
flow, and the adaptive mechanism sits consistently between the two fixed
benchmarks. This demonstrates that the mixture dynamically balances price
impact rather than committing to a single liquidity level.

Third, liability drives the main regime shifts. In the bottom panel,
protected exposure is large in R1 and R4 due to sustained inventory, and
nearly zero in R2 where trades cancel out. In R3, liability oscillates in
sync with trade flow but remains bounded, reflecting controlled exposure.
The adaptive mechanism closely tracks the higher-liability expert in
R1/R4 and suppresses exposure in R2, again interpolating between fixed
markets.

Overall, the behavior is consistent with the structure of the hybrid
signal. Slippage governs short-term responses to trade size, producing
the oscillatory adjustments seen in R3, while liability governs
long-term positioning, driving the sharp transitions between low- and
high-liquidity regimes. The resulting dynamics show that the learner does
not merely smooth between experts, but instead switches regimes in a
state-dependent manner, aligning liquidity with the dominant source of
risk at each point in time.


\section{Conclusion}
We introduced an adaptive prediction market that dynamically adjusts liquidity by combining multiple cost-function market makers through learnable weights. The resulting mechanism preserves the core guarantees of convex cost-function markets, including no-arbitrage, bounded worst-case loss, expressiveness, and positive upside while enabling liquidity to respond to changing market conditions.

Our analysis shows that adaptive liquidity can be framed as minimizing a structural risk signal that trades off price impact and inventory risk. Using this signal, the mechanism competes with the best sequence of liquidity regimes in hindsight, with deviations driven primarily by the mismatch between learning and pricing weights. When regimes are well-separated, this mismatch vanishes and the implemented market closely tracks the learned objective.

Our guarantees are expressed in terms of the hybrid structural signal
rather than direct economic welfare, and our experiments are limited to
stylized simulations. Important future directions include relating the
signal to objectives such as trader surplus and information aggregation,
studying equilibrium behavior under adaptive liquidity, and designing
alternative update mechanisms that reduce the mismatch between learning
and pricing weights. More broadly, this work shows that liquidity
selection can be treated as a learning problem, opening the door to
adaptive market designs that are both economically grounded and
statistically principled.


\newpage 
\bibliographystyle{plainnat}
\bibliography{bib}


\newpage 


\appendix
\section{Proofs and Additional Results from Section 3}

\begin{lemma}[Gradient and Hessian of the log-sum-exp mixed potential]
\label{lem:master-grad-hess-varcov}
Fix $\beta>0$ and $w\in\Delta_M$ with full support.  
Let $C_1,\dots,C_M:\R^d\to\R$ be twice continuously differentiable and define $C^{\mix}:= \frac{1}{\beta}\log \sum_{k=1}^M w (k) e^{\beta C_k(q)}$. Then the gradient and Hessian of the mixed potential are
\begin{align}
\nabla C^{\mix}(q;w)
&=
\sum_{k=1}^M \pi_k(q;w)\,\nabla C_k(q),
\label{eq:master-grad}\\[6pt]
\nabla^2 C^{\mix}(q;w)
&=
\sum_{k=1}^M \pi_k(q;w)\,\nabla^2 C_k(q)
+
\beta\,\Cov_{k\sim\pi(q;w)}\!\big(\nabla C_k(q) ; q, w\big),
\label{eq:master-hess}
\end{align}
where the covariance of the gradients under the posterior distribution is defined as:
\[
\Cov_{k\sim\pi(q;w)}[\nabla C_k(q)]
:=
\sum_{k=1}^M
\pi_k\,(\nabla C_k(q)- \nabla_q C^{\mix}(q;w))(\nabla C_k(q)-\nabla_q C^{\mix}(q;w))^\top
\succeq 0 .
\]

Then, for any direction $r \in \mathbb{R}^d$, the directional curvature of the mixed potential decomposes into the expected curvature of individual experts and the variance of their directional derivatives:
\begin{equation}
\label{eq:master-hess-rayleigh}
r^\top \nabla^2 C^{\mix}(q;w) r
=
\sum_{k=1}^M
\pi_k(q;w)\, r^\top \nabla^2 C_k(q) r
+
\beta\,\Var_{k\sim\pi(q;w)}\!\big(r^\top \nabla C_k(q)\big),
\end{equation}
where $\Var_{k\sim\nu}(X):=\E_{k\sim\nu}[(X-\E X)^2]$.
\end{lemma}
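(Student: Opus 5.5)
The plan is a direct chain-rule computation. I will write $Z(q):=\sum_{k}w(k)e^{\beta C_k(q)}$, so that $C^{\mix}(q;w)=\beta^{-1}\log Z(q)$. Full support of $w$ together with $e^{\beta C_k}>0$ gives $Z>0$, so the logarithm is smooth wherever the $C_k$ are $C^2$. Each $\pi_k=w(k)e^{\beta C_k}/Z$ is then a well-defined $C^1$ function, and $\pi(q;w)\in\relint\Delta_M$.

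\textbf{Gradient.} Differentiating once gives $\nabla Z=\beta\sum_k w(k)e^{\beta C_k}\nabla C_k$. Hence
\[
\nabla C^{\mix}=\frac{1}{\beta}\cdot\frac{\nabla Z}{Z}=\sum_k\pi_k\nabla C_k ,
\]
which is \eqref{eq:master-grad}.

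\textbf{Hessian: weights.} Next I differentiate this expression with the product rule. The one auxiliary computation needed is the gradient of the posterior weights. Writing $\log\pi_k=\log w(k)+\beta C_k-\log Z$, we get
\[
\nabla\pi_k=\pi_k\bigl(\beta\nabla C_k-\beta\nabla C^{\mix}\bigr)=\beta\pi_k\bigl(\nabla C_k-\bar g\bigr),
\qquad \bar g:=\nabla C^{\mix}=\sum_j\pi_j\nabla C_j .
\]

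\textbf{Hessian: assembly.} Differentiating \eqref{eq:master-grad} gives
\[
\nabla^2C^{\mix}=\sum_k\pi_k\nabla^2C_k+\beta\sum_k\pi_k\,\nabla C_k(\nabla C_k-\bar g)^\top .
\]
To turn the second sum into a centered covariance, I use $\sum_k\pi_k(\nabla C_k-\bar g)=0$. This identity allows subtracting $\bar g(\nabla C_k-\bar g)^\top$ inside the sum at no cost, producing $\beta\sum_k\pi_k(\nabla C_k-\bar g)(\nabla C_k-\bar g)^\top=\beta\Cov_{k\sim\pi}$. That gives \eqref{eq:master-hess}. The centering trick, together with confirming that the resulting matrix is symmetric, is the only step requiring care.

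\textbf{Positive semidefiniteness.} This follows because the covariance is a convex combination, with weights $\pi_k\ge0$, of rank-one outer products $vv^\top\succeq0$.

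\textbf{Directional identity.} Finally, for \eqref{eq:master-hess-rayleigh} I take the quadratic form with a fixed $r$. The first term becomes $\sum_k\pi_k\,r^\top\nabla^2C_kr$. The second becomes $\beta\sum_k\pi_k\bigl(r^\top\nabla C_k-r^\top\bar g\bigr)^2$. Since $r^\top\bar g=\E_{k\sim\pi}[r^\top\nabla C_k]$, this is exactly $\beta\Var_{k\sim\pi}(r^\top\nabla C_k)$.

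There is no real obstacle here beyond bookkeeping: tracking the factor $\beta$ and the centering step.
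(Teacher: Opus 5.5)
Your proposal is correct and follows essentially the same route as the paper: chain rule for the gradient, the derivative $\nabla\pi_k=\beta\pi_k(\nabla C_k-\bar g)$ (you obtain it via $\log\pi_k$, the paper via the quotient rule), the centering identity $\sum_k\pi_k(\nabla C_k-\bar g)=0$ to produce the covariance, and then the quadratic form in $r$. Your outer-product ordering $\nabla C_k(\nabla C_k-\bar g)^\top$ is the transpose of the paper's $(\nabla C_k-\bar g)\nabla C_k^\top$. This is immaterial, since after centering both give the same symmetric matrix.
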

\begin{proof}
We define the partition function $Z(q)$ such that the mixed potential $C^{\text{mix}}(q; w)$ is exactly the scaled log-partition function:
\[
Z(q)
:=
\sum_{k=1}^M w (k) e^{\beta C_k(q)},
\qquad
C^{\mix}(q;w)=\frac{1}{\beta}\log Z(q).
\]

Since the sum defining $Z(q)$ is finite and each $C_k$ is twice continuously differentiable, we may
differentiate term-by-term.

We first derive the gradient. By the chain rule, $\nabla C^{\text{mix}}(q; w) = \frac{1}{\beta} \frac{\nabla Z(q)}{Z(q)}$. Computing the gradient $\nabla Z(q)$ and factoring out $Z(q)$ yields
$\nabla Z(q)
=\sum_{k=1}^M w(k) e^{\beta C_k(q)}\,\beta\nabla C_k(q)
=\beta Z(q)
\sum_{k=1}^M \pi_k(q;w)\nabla C_k(q)$.
Substituting into the chain rule expression gives
$\nabla C^{\mix}(q;w)
= \sum_{k=1}^M \pi_k(q;w)\nabla C_k(q)$, which yields exactly \eqref{eq:master-grad}.

We now solve for the derivative of the posterior weights.
We express the unnormalized weights as $a_k(q):=w (k) e^{\beta C_k(q)}$ such that $\pi_k=\frac{a_k}{Z}$. Applying the chain rule gives
\[\nabla a_k(q)
=a_k(q)\,\beta\nabla C_k(q)\]
By linearity, we have
\[\nabla Z(q)=
\sum_{j=1}^M\nabla a_j(q)=
\beta Z(q)\nabla C^{\mix}(q;w).\]
Using the quotient rule to $\pi_k$ yields 
\[\nabla \pi_k
=\frac{\nabla a_k}{Z}-\frac{a_k}{Z^2}\nabla Z=
\beta\pi_k\big(\nabla C_k(q)- \nabla C^{\mix}(q;w) \big).\]

We now solve for the Hessian.
Differentiating \eqref{eq:master-grad} and applying the product rule gives
\[
\nabla^2 C^{\mix}(q;w)
=
\sum_{k=1}^M
\Big(
(\nabla \pi_k)\nabla C_k(q)^\top
+
\pi_k\nabla^2 C_k(q)
\Big).\]
Substituting the expression for $\nabla\pi_k$ yields
\[\nabla^2 C^{\mix}(q;w)
=
\sum_{k=1}^M
\pi_k\nabla^2 C_k(q)
+
\beta
\sum_{k=1}^M
\pi_k(\nabla C_k(q)- \nabla C^{\mix}(q;w) )\nabla C_k(q)^\top.\]
To rewrite the second term as a covariance matrix, let $\bar{g}(q) := \nabla C^{\text{mix}}(q; w)$ be the expected gradient. By definition, the expected centered gradient is zero $$\sum_{k=1}^M \pi_k (\nabla C_k(q) - \bar{g}(q)) = 0$$
We introduce $\bar{g}(q)$ into the outer product
\[\sum_{k=1}^M
\pi_k(\nabla C_k-\bar g)\nabla C_k^\top
=\sum_{k=1}^M \pi_k(\nabla C_k - \bar{g})(\nabla C_k - \bar{g})^\top
= \sum_{k=1}^M
\pi_k(\nabla C_k- \nabla C^{\mix}(q;w) )(\nabla C_k- \nabla C^{\mix}(q;w))^\top.\]

Substituting this back into the Hessian expression becomes
\[
\nabla^2 C^{\mix}(q;w)
=
\sum_{k=1}^M
\pi_k(q;w)\nabla^2 C_k(q)
+
\beta
\sum_{k=1}^M
\pi_k(q;w)
(\nabla C_k(q)-\nabla C^{\mix}(q;w) )(\nabla C_k(q)- \nabla C^{\mix}(q;w) )^\top,
\]
which is exactly \eqref{eq:master-hess}.

Finally, we derive the Hessian's quadratic form.
For any vector $r\in\R^d$, pre- and post-multiplying the covariance matrix by $r$ isolates the variance along the direction of the trade
\[
r^\top
\Cov_{k\sim\pi(q;w)}[\nabla C_k(q)]
r
=
\E_{k\sim\pi}\big[(r^\top(\nabla C_k(q)- \nabla C^{\mix}(q;w)))^2\big]
=
\Var_{k\sim\pi(q;w)}[r^\top \nabla C_k(q)].\]

Applying this identity to \eqref{eq:master-hess} yields
\eqref{eq:master-hess-rayleigh}.
\end{proof}

\begin{theorem}[Smoothness of the log-sum-exp mixed potential]
\label{thm:smoothness-logsum-master}
Fix $\beta>0$ and $w\in\Delta_M$ with full support. Let $C_1,\dots,C_M:\mathbb R^d\to\mathbb R$ be convex and twice continuously differentiable and $\beta>0$, and define $C^{\mix}:= \frac{1}{\beta}\log \sum_{k=1}^M w (k) e^{\beta C_k(q)}$. Let $\|\cdot\|$ be a norm in $\mathbb{R}^d$ with dual norm $\|\cdot\|_*$. Assume
\begin{itemize}
    \item[(i)] Each $C_k$ is $L_k$-smooth with respect to $\|\cdot\|$
    \item[(ii)] For all $k \in [M],q \in \mathbb{R}^d$, $\|\nabla C_k(q)\|_*\le G$
\end{itemize}
Then $C^{\mix}(\cdot;w)$ is $L_{\mix}$-smooth w.r.t.\ $\|\cdot\|$ with
\[L_{\mix}\le L_{\max}+\beta G^2,
\qquad L_{\max}:=\max_{k} L_k.\]
\end{theorem}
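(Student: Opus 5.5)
The plan is to bound the directional curvature of $C^{\mix}(\cdot;w)$ using the Rayleigh-quotient identity \eqref{eq:master-hess-rayleigh} from Lemma~\ref{lem:master-grad-hess-varcov}, and then convert the resulting curvature bound into $L_{\mix}$-smoothness. The route is: for twice continuously differentiable convex $f$, $L$-smoothness with respect to $\|\cdot\|$ is equivalent to $r^\top\nabla^2 f(q)\,r\le L\|r\|^2$ for all $q,r$. If needed, we can avoid quoting this equivalence by integrating along segments. Writing $\nabla f(y)-\nabla f(x)=\int_0^1\nabla^2 f(x+s(y-x))(y-x)\,ds$ and pairing with an arbitrary $u$ with $\|u\|\le 1$, we use that the Hessian is PSD (convexity of $C^{\mix}$, noted in Section~\ref{sec:ensemble-diagonal}). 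Cauchy--Schwarz for the PSD bilinear form then gives $u^\top H v\le\sqrt{u^\top Hu}\sqrt{v^\top Hv}\le L\|u\|\|v\|$. Taking the supremum over $u$ yields $\|\nabla f(y)-\nabla f(x)\|_*\le L\|y-x\|$.

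Fix $q,r$. By \eqref{eq:master-hess-rayleigh},
\[
r^\top\nabla^2 C^{\mix}(q;w)\,r=\sum_k\pi_k(q;w)\,r^\top\nabla^2C_k(q)\,r+\beta\,\Var_{k\sim\pi(q;w)}\big(r^\top\nabla C_k(q)\big).
\]
For the first term, assumption (i) together with convexity and $C^2$ of each $C_k$ gives $0\le r^\top\nabla^2C_k(q)\,r\le L_k\|r\|^2\le L_{\max}\|r\|^2$. This follows from the same Hessian--smoothness equivalence, and concretely from $r^\top\nabla^2C_k(q)r=\lim_{s\to0}\langle\nabla C_k(q+sr)-\nabla C_k(q),r\rangle/s\le L_k\|r\|^2$. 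Since $\pi(q;w)\in\Delta_M$, the convex combination is at most $L_{\max}\|r\|^2$.

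For the variance term, bound the variance by the second moment: $\Var(X)\le\E[X^2]$. By the definition of the dual norm, $|r^\top\nabla C_k(q)|\le\|\nabla C_k(q)\|_*\|r\|\le G\|r\|$ using (ii). Hence $\Var_{k\sim\pi}(r^\top\nabla C_k(q))\le G^2\|r\|^2$. Summing the two terms gives $r^\top\nabla^2C^{\mix}r\le(L_{\max}+\beta G^2)\|r\|^2$, and the first paragraph concludes. The bound could be improved to $G^2\|r\|^2$ via a range/Popoviciu argument, but that is not needed.

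The main obstacle is the passage between the quadratic-form bound and gradient Lipschitzness in a general, non-Euclidean norm. Here we rely on the PSD Cauchy--Schwarz step and on the assumption $C_k\in C^2$. The rest is a direct application of the lemma.
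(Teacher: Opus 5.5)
Your proposal is correct and follows essentially the same route as the paper: apply the Rayleigh identity \eqref{eq:master-hess-rayleigh}, bound the $\pi$-weighted curvature term by $L_{\max}\|r\|^2$, and bound the variance by the second moment, which gives $G^2\|r\|^2$ via the dual-norm inequality. The one difference is that you explicitly justify both Hessian/smoothness conversions (the difference-quotient argument for each $C_k$, and the PSD Cauchy--Schwarz integration for $C^{\mix}$, which matters in a non-Euclidean norm), whereas the paper simply treats the quadratic-form bound as the definition of smoothness; your closing Popoviciu aside is moot, since that argument yields the same $G^2\|r\|^2$ you already have.
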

\begin{proof}
Via Lemma \ref{lem:master-grad-hess-varcov}, for any direction $r\in\mathbb R^d$, the quadratic form of the mixed Hessian is given as
\begin{align*}
r^\top \nabla^2 C^{\mix}(q;w)\, r
&=
\sum_{k=1}^M \pi_k(q;w)\, r^\top \nabla^2 C_k(q)\, r
\;+\;
\beta\sum_{k=1}^M \pi_k(q;w)\Big(r\cdot(\nabla C_k(q)-\nabla C^{\mix}(q;w))\Big)^2 \\
&=
\sum_{k=1}^M \pi_k(q;w)\, r^\top \nabla^2 C_k(q)\, r
\;+\;
\beta\,\Var_{k\sim\pi(q;w)}\!\big(r\cdot \nabla C_k(q)\big),
\end{align*}
where $
\Var_{k\sim\pi(q;w)}\!\big(r\cdot \nabla C_k(q)\big)
:=
\sum_{k=1}^M \pi_k(q;w)\Big(r\cdot \nabla C_k(q)-r\cdot \nabla C^{\mix}(q;w)\Big)^2$.
\newline We bound the two terms separately. 
The assumption that each $C_k$ is $L_k$-smooth gives \[r^\top \nabla^2 C_k(q)\,r\le L_k\|r\|^2.\] 
Since the posterior probabilities $\pi_k$ form a valid distribution, the convex combination is bounded by the maximum component: $\sum_k \pi_k r^\top \nabla^2 C_k(q)\,r \le L_{\max}\|r\|^2$.
For the variance term, by definition of the dual norm, we have
\[\big|r\cdot \nabla C_k(q)\big| \le
\|r\|\,\|\nabla C_k(q)\|_*
\le \|r\|\,G.\]
This implies $(r\cdot \nabla C_k(q))^2\le \|r\|^2G^2$. For any random variable $X$, $\text{Var}(X) \le \mathbb{E}[X^2]$, we bound the variance strictly by its second moment
\[\Var_{k\sim\pi}\!\big(r\cdot \nabla C_k(q)\big) \le \sum_{k=1}^M \pi_k (r\cdot \nabla C_k(q))^2 \le \|r\|^2 G^2.\]
Combining these bounds yields
\[
r^\top \nabla^2 C^{\mix}(q;w)\, r
\le
\big(L_{\max}+\beta G^2\big)\|r\|^2
\quad \forall r \in \mathbb{R}^d,\]
which exactly satisfies the definition of $L_{\text{mix}}$-smoothness with $L_{\mix}\le L_{\max}+\beta G^2$.
\end{proof}

\begin{lemma}[First-order upper bound in the weights]
\label{lem:cmix-concave-w}
Fix \(q\in\mathbb{R}^d\) and \(\beta>0\) and define $C^{\mix}(q;w)
=
\frac1\beta
\log\!\Big(\sum_{k=1}^M w (k) e^{\beta C_k(q)}\Big)$.
Let $\Delta_\delta^M := \{w \in \Delta_M : w_k \ge \delta, \forall  k \in [M]\}$ for some $\delta > 0$. For any $w_t, w_{t+1} \in \Delta_\delta^M$, the map $w \mapsto C^{\text{mix}}(q; w)$ is concave, and specifically satisfies the upper bound:
\[C^{\mix}(q;w_{t+1})-C^{\mix}(q;w_t)
\le
\nabla_w C^{\mix}(q;w_t)\cdot (w_{t+1}-w_t).\]
\end{lemma}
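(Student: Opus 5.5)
The plan is to recognize $w\mapsto C^{\mix}(q;w)$ as a concave function and then use the tangent-line inequality for differentiable concave functions. Fix $q$ and set $c_k := e^{\beta C_k(q)}>0$, so that
\[
C^{\mix}(q;w)=\tfrac1\beta \log\bigl(c\cdot w\bigr),
\]
where $c\cdot w=\sum_k c_k w(k)$. This is the composition of the linear map $w\mapsto c\cdot w$ with the concave, increasing function $\tfrac1\beta\log(\cdot)$.

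On $\Delta^M_\delta$ the inner product is strictly positive, since $c\cdot w\ge \delta\sum_k c_k>0$. So the map is well defined and smooth on an open neighborhood of $\Delta^M_\delta$ (in fact on the whole open cone $\{w: c\cdot w>0\}$). Concavity follows because a concave function composed with an affine map is concave.

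To make the argument self-contained, I will also compute derivatives directly. The gradient is
\[
\nabla_w C^{\mix}(q;w)_k=\frac{c_k}{\beta\,(c\cdot w)},
\]
and the Hessian is $-\frac{1}{\beta (c\cdot w)^2}cc^\top\preceq 0$. This gives concavity on the convex set $\Delta^M_\delta$.

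For the inequality itself, I will use the scalar function $\phi(s):=C^{\mix}(q;w_t+s(w_{t+1}-w_t))$ on $[0,1]$. Its segment stays inside $\Delta^M_\delta$ by convexity of that set. The function $\phi$ is concave and differentiable, so $\phi(1)\le\phi(0)+\phi'(0)$, and $\phi'(0)=\nabla_w C^{\mix}(q;w_t)\cdot(w_{t+1}-w_t)$ by the chain rule. Alternatively, I can avoid calculus entirely with $\log x\le \log y+(x-y)/y$ applied to $x=c\cdot w_{t+1}$ and $y=c\cdot w_t$. This yields exactly
\[
\tfrac1\beta\frac{c\cdot(w_{t+1}-w_t)}{c\cdot w_t}=\nabla_w C^{\mix}(q;w_t)\cdot(w_{t+1}-w_t).
\]

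The only delicate point is the domain. The gradient in $w$ must be taken as the gradient of the extension of the formula to an open set around $\Delta_M$, not a gradient intrinsic to the simplex. I also need $c\cdot w_t>0$, which the $\delta$-floor (or just full support) guarantees. Everything else is routine.
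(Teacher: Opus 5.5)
Your proposal is correct and follows essentially the same route as the paper: write the map as $\frac{1}{\beta}\log$ of a positive linear form $c\cdot w$, compute the gradient $\frac{1}{\beta}\,c/(c\cdot w)$ and the negative semidefinite Hessian $-\frac{1}{\beta}\,cc^\top/(c\cdot w)^2$ on the open set where $c\cdot w>0$, and apply the supporting-hyperplane inequality for differentiable concave functions. Your alternative one-line argument via $\log x\le\log y+(x-y)/y$ is a slightly more elementary way to reach the same bound, and the $\delta$-floor is not actually needed: $c\cdot w>0$ already holds on all of $\Delta_M$ because every $c_k>0$.
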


\begin{proof}
Let $a(q) \in \mathbb{R}^M$ denote the vector with strictly positive components $a_k(q) := e^{\beta C_k(q)} > 0$. We rewrite the partition function $Z(q, w) = w^\top a(q)$. Then the mixed potential becomes
\[C^{\mix}(q;w)=\frac1\beta \log Z(q,w).\]
Since for all $k \in [M]$, \(a_k(q)>0\), the partition function is strictly positive ($w^\top a(q) > 0$), so \(Z(q,w)>0\), for all \(w\in\Delta_\delta^M\). Thus, $C^{\text{mix}}(q; \cdot)$ is infinitely differentiable on the open set \(\{w:Z(q,w)>0\}\) containing $\Delta_M$.

Differentiating with respect to vector \(w\) yields the gradient
\[
\nabla_w C^{\mix}(q;w)=\frac1\beta \frac{a(q)}{Z(q,w)}\]
Differentiating again yields the Hessian matrix
\[
\nabla_w^2 C^{\mix}(q;w)
=
-\frac1\beta \frac{a(q)a(q)^\top}{Z(q,w)^2}\]
For any vector $u \in \mathbb{R}^M$, the quadratic form of the Hessian is exactly
\[u^\top \nabla_w^2 C_{\mix}(q;w)u
=
-\frac1\beta \frac{(u^\top a(q))^2}{Z(q,w)^2}
\le 0.\]
Because the Hessian is negative semi-definite, \(\nabla_w^2 C^{\mix}(q;w)\preceq 0\), the map $w \mapsto C_{\text{mix}}(q; w)$ is globally concave on its domain, implying it is concave on $\Delta_\delta^M$.
Applying the supporting-hyperplane inequality for concave differentiable functions
\[
C^{\mix}(q;v)
\le
C^{\mix}(q;u)+\nabla_w C^{\mix}(q;u)\cdot (v-u), \quad \forall u,v\in\Delta_\delta^M \]
Setting \(u=w_t\) and \(v=w_{t+1}\) yields the desired first-order bound:
\[
C^{\mix}(q;w_{t+1})-C^{\mix}(q;w_t)
\le
\nabla_w C^{\mix}(q;w_t)^\top (w_{t+1}-w_t).
\]
\end{proof}

\begin{corollary}[Cost-update bound]\label{corr:costupdatedbound}
For each round \(t\) and strictly positive scalar $a\in\reals_{>0}$,
\[a\big(C^{\mix}(q_t;w_{t+1})-C^{\mix}(q_t;w_t)\big)
\le
a\,\nabla_w C^{\mix}(q_t;w_t)\cdot (w_{t+1}-w_t).\]
Consequently, bounding this by its positive part yields
\[a\big(C^{\mix}(q_t;w_{t+1})-C^{\mix}(q_t;w_t)\big)
\le
a\Big[\nabla_w C^{\mix}(q_t;w_t)\cdot (w_{t+1}-w_t)\Big]_+ .\]
\end{corollary}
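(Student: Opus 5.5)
This follows directly from Lemma~\ref{lem:cmix-concave-w}, instantiated at the realized state. We fix a round $t$ and set $q := q_t$. The weights $w_t, w_{t+1}$ produced by the controller lie in $\Delta_M$ with full support. For the two weight vectors we use, we may take $\delta := \min_k \min\{w_t(k), w_{t+1}(k)\} > 0$, so that both lie in $\Delta_\delta^M$. Strictly speaking, the concavity argument in the lemma only needs $Z(q,w) = w^\top a(q) > 0$ on the segment joining $w_t$ and $w_{t+1}$. This already holds on all of $\Delta_M$ because every $a_k(q) = e^{\beta C_k(q)} > 0$. So the membership condition is a formality.

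Applying Lemma~\ref{lem:cmix-concave-w} with $q = q_t$ gives
\[
C^{\mix}(q_t;w_{t+1}) - C^{\mix}(q_t;w_t) \le \nabla_w C^{\mix}(q_t;w_t)\cdot(w_{t+1}-w_t).
\]
Multiplying both sides by $a > 0$ preserves the inequality, which yields the first claim.

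For the second claim, we use the elementary fact that $x \le [x]_+$ for every real $x$. Applied to $x = \nabla_w C^{\mix}(q_t;w_t)\cdot(w_{t+1}-w_t)$ and multiplied by $a>0$, this gives $a\,x \le a[x]_+$. Chaining this with the first inequality gives the stated bound.

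The proof is routine. The only point that needs checking is that the lemma's hypotheses ($w_t, w_{t+1}$ in a common $\Delta_\delta^M$) hold at the realized weights, and this is exactly what the full-support assumption supplies.
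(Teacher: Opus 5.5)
Your proposal is correct and matches the paper's proof: apply the concavity bound of Lemma~\ref{lem:cmix-concave-w} at $q=q_t$, multiply by $a>0$, and then use $x\le[x]_+$. Your remark that $Z(q,w)=w^\top a(q)>0$ on all of $\Delta_M$, so the $\Delta_\delta^M$ restriction is only a formality, is valid and justifies applying the lemma slightly more explicitly than the paper does.
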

\begin{proof}
The first inequality follows directly from multiplying the concavity upper bound shown in Lemma \ref{lem:cmix-concave-w} by $a\in\reals_{>0}$. The second inequality follows trivially as any real number is bounded by its positive part, $x \le [x]_+$. 
\end{proof}


\section{Switching Fees and Budget Corrections}
\label{app:fees}
When mixture weights change from $w_t$ to $w_{t+1}$, the potential
$C^{\mix}(\cdot;w)$ changes. As a result, the standard telescoping
structure of cost-function markets no longer holds automatically, and
payments must be corrected to preserve no-arbitrage and bounded loss.

The adaptive payment rule is
\begin{equation}
\Pay_t^{\mix}(r_t)
=
C^{\mix}(q_t;w_{t+1})
-
C^{\mix}(q_{t-1};w_t)
+
\mathrm{fee}_t,
\label{eq:payment-fee}
\end{equation}
where $\mathrm{fee}_t$ compensates for the change in the potential induced by
the weight update.

The minimal globally valid positive correction is given by the 
switch budget
\begin{equation}
[\Delta_t^\star ]_+
:=
\sup_{q\in\mathbb{R}^d}
\Big(
C^{\mix}(q;w_t)
-
C^{\mix}(q;w_{t+1})
\Big)_+.
\label{eq:delta-star}
\end{equation}
This is the smallest quantity
such that
\[
C^{\mix}(q;w_t)
\le
C^{\mix}(q;w_{t+1})
+
[\Delta_t^\star]_+
\quad
\text{for all } q\in\mathbb{R}^d.
\]

By construction, $[\Delta_t^\star]_+ = 0$ whenever the
weight update does not decrease the potential anywhere. Thus,
$[\Delta_t^\star]_+$ acts as a one-sided correction, compensating only for
adverse changes in the potential. Substituting
$\textrm{fee}_t = [\Delta_t^\star]_+$ into \eqref{eq:payment-fee} yields
\[
\sum_{t=1}^T \Pay_t^{\mix}(r_t)
=
C^{\mix}(q_T;w_{T+1})
-
C^{\mix}(q_0;w_1)
+
\sum_{t=1}^T [\Delta_t^\star]_+,
\]
so the cumulative effect of weight updates is captured by the switching
budget $\sum_{t=1}^T [\Delta_t^\star]_+$, which ensures bounded loss.

The quantity $\Delta_t^\star$ can be interpreted as an insurance premium
that protects against worst-case decreases in the potential when
switching between liquidity regimes. When weight updates are small, the
difference $C^{\mix}(q;w_t)-C^{\mix}(q;w_{t+1})$ is small uniformly in
$q$, and thus $[\Delta_t^\star]_+$ is negligible.

The definition \eqref{eq:delta-star} takes a supremum over all
$q\in\mathbb{R}^d$, which can be conservative when trades are bounded.
If trades satisfy $\|r_t\|\le R$, then the reachable states lie in a set
$\mathcal Q_R$, and one may instead consider the restricted switch
budget
\begin{equation}
[\Delta_t^{\mathrm{res}}]_+
:=
\sup_{q\in \mathcal Q_R}
\Big(
C^{\mix}(q;w_t)
-
C^{\mix}(q;w_{t+1})
\Big)_+,
\label{eq:delta-restricted}
\end{equation}
which is typically smaller than $[\Delta_t^\star]_+$. This yields the weaker
guarantee
\[
C^{\mix}(q;w_t)
\le
C^{\mix}(q;w_{t+1})
+
\Delta_t^{\mathrm{res}}
\quad
\text{for all reachable } q,
\]
but may fail outside $\mathcal Q_R$.

An alternative is to use a pathwise correction depending only on the
realized trade,
\begin{equation}
[\Delta_t^{\mathrm{path}}(r_t)]_+
:=
\Big(
C^{\mix}(q_t;w_t)
-
C^{\mix}(q_t;w_{t+1})
\Big)_+.
\label{eq:delta-path}
\end{equation}
This ensures that the correction is exact along the realized trajectory,
so that payments telescope pathwise. However, it does not provide a
uniform bound over all states and therefore does not guarantee
no-arbitrage under arbitrary counterfactual trades.

These choices reflect a tradeoff between robustness and efficiency. The
global correction $[\Delta_t^\star]_+$ provides the strongest guarantees but
can be conservative, while restricted and pathwise corrections reduce the
magnitude of fees at the cost of weaker guarantees.

\section{Proofs and Additional Results from Section 4}\label{app:sec4}

\begin{theorem}
\label{thm:switch-fee-domination}
Assume the weight sequence satisfies the full support condition \eqref{eq:full-support-aug}.
Then for every horizon $T \ge 1$, it holds for $\Pay_t^{\mathrm{mix}}$ with $\mathrm{fee}_t = [\Delta_t^\star]_+$
\begin{equation}
\sum_{t=1}^T \Pay_t^{\mathrm{mix}}(r_t)
\ge
C^{\mathrm{mix}}(q_T;w_1)
-
C^{\mathrm{mix}}(q_0;w_1).
\label{eq:pay-lowerbound-fee}
\end{equation}

\end{theorem}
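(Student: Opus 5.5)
The plan is to telescope the payments and then absorb the drift in the weights using the defining property of the switch budget. Substituting $\mathrm{fee}_t=[\Delta_t^\star]_+$ into the payment rule and summing over $t=1,\dots,T$, the terms $C^{\mix}(q_t;w_{t+1})$ from round $t$ cancel against $-C^{\mix}(q_t;w_{t+1})$ from round $t+1$. This gives the identity already recorded in Appendix~\ref{app:fees}:
\[
\sum_{t=1}^T \Pay_t^{\mix}(r_t)
=
C^{\mix}(q_T;w_{T+1})-C^{\mix}(q_0;w_1)+\sum_{t=1}^T[\Delta_t^\star]_+ .
\]
It therefore suffices to prove the single-point inequality $C^{\mix}(q_T;w_1)\le C^{\mix}(q_T;w_{T+1})+\sum_{t=1}^T[\Delta_t^\star]_+$.

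For that inequality I use that $[\Delta_t^\star]_+$ dominates $C^{\mix}(q;w_t)-C^{\mix}(q;w_{t+1})$ uniformly in $q$. Hence, for every $q$ and every $t$,
\[
C^{\mix}(q;w_t)\le C^{\mix}(q;w_{t+1})+[\Delta_t^\star]_+ .
\]
Evaluating this at the fixed point $q=q_T$ and chaining it for $t=1,2,\dots,T$ gives
\[
C^{\mix}(q_T;w_1)\le C^{\mix}(q_T;w_2)+[\Delta_1^\star]_+\le\cdots\le C^{\mix}(q_T;w_{T+1})+\sum_{t=1}^T[\Delta_t^\star]_+ .
\]
Combining this with the telescoped identity yields \eqref{eq:pay-lowerbound-fee}. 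No induction over states is needed, because every link is evaluated at the same point $q_T$.

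The only real obstacle is checking that the fee is well defined, i.e.\ that $\Delta_t^\star<\infty$; otherwise the chain is vacuous and the payment rule is meaningless. This is where the full support assumption \eqref{eq:full-support-aug} enters. For every $q$ and every $k$ we have $w_t(k)e^{\beta C_k(q)}\le \big(\max_j w_t(j)/w_{t+1}(j)\big)\,w_{t+1}(k)e^{\beta C_k(q)}$. Summing over $k$ and taking $\frac1\beta\log$ gives
\[
C^{\mix}(q;w_t)-C^{\mix}(q;w_{t+1})\le \frac1\beta\log\max_{k}\frac{w_t(k)}{w_{t+1}(k)} ,
\]
uniformly in $q$. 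The right-hand side is finite because $w_{t+1}(k)>0$ for all $k$. So each $[\Delta_t^\star]_+$ is a finite nonnegative number and the argument above goes through.
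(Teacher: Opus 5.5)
Your proposal is correct and matches the paper's proof essentially step for step. You telescope the payments to $C^{\mix}(q_T;w_{T+1})-C^{\mix}(q_0;w_1)+\sum_t \fee_t$ and then chain the uniform inequality $C^{\mix}(q;w_t)\le C^{\mix}(q;w_{t+1})+[\Delta_t^\star]_+$ at the single point $q=q_T$. Your finiteness bound $\Delta_t^\star\le \frac1\beta\log\max_k w_t(k)/w_{t+1}(k)$ is a worthwhile addition the paper omits, and it is the only place the full-support assumption is actually used.
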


\begin{proof} To prove our result, we do so via three steps.
\newline \textbf{1) One-step domination.}
By the definition of the supremum $\Delta_t^\star$, for all $q \in \mathbb{R}^d$, we have
\[
C^{\mathrm{mix}}(q;w_t)
\le
C^{\mathrm{mix}}(q;w_{t+1})
+
\Delta_t^\star.
\]
Evaluating this inequality at $q = q_{t-1}$ gives
\[
C^{\mathrm{mix}}(q_{t-1};w_t)
\le
C^{\mathrm{mix}}(q_{t-1};w_{t+1})
+
\Delta_t^\star.
\]
Because the fee is defined as the positive part of the cost increase, $\fee_t = [\Delta_t^\star]_+ \ge \Delta_t^\star$, we write
\[
C^{\mathrm{mix}}(q_{t-1};w_t)
\le
C^{\mathrm{mix}}(q_{t-1};w_{t+1})
+
\fee_t.\]
Rearranging the terms gives
\[
-
C^{\mathrm{mix}}(q_{t-1};w_t)
+
\fee_t
\ge
-
C^{\mathrm{mix}}(q_{t-1};w_{t+1}).
\]
Substituting this lower into the definition of the payment rule yields
\begin{align*}
\Pay_t^{\mathrm{mix}}(r_t)
&=
C^{\mathrm{mix}}(q_t;w_{t+1})
-
C^{\mathrm{mix}}(q_{t-1};w_t)
+
\fee_t \\
&\ge
C^{\mathrm{mix}}(q_t;w_{t+1})
-
C^{\mathrm{mix}}(q_{t-1};w_{t+1}).
\end{align*}

\medskip
\noindent
\textbf{2) Uniform domination.}
From the definition of $\Delta_t^\star$, for all $q \in \mathbb{R}^d$, we know
\[C^{\text{mix}}(q; w_t) - C^{\text{mix}}(q; w_{t+1}) \le \Delta_t^\star.\]
Rearranging the terms yields
\[
C^{\mathrm{mix}}(q;w_t)
\le
C^{\mathrm{mix}}(q;w_{t+1})
+
\Delta_t^\star.\]
Summing these inequalities over all rounds $t = 1, \dots, T$ causes the intermediate cost terms to telescope, yielding
\[
C^{\mathrm{mix}}(q;w_1)
\le
C^{\mathrm{mix}}(q;w_{T+1})
+
\sum_{t=1}^T \Delta_t^\star,
\]
which provides the uniform domination bound for any $q$.

\medskip
\noindent
\textbf{3) Exact telescoping and lower bound.}
Summing the payment rule over $T$ rounds gives
\begin{align*}
\sum_{t=1}^T \Pay_t^{\mathrm{mix}}(r_t)
&=
\sum_{t=1}^T
\Big(
C^{\mathrm{mix}}(q_t;w_{t+1})
-
C^{\mathrm{mix}}(q_{t-1};w_t)
+
\fee_t
\Big).
\end{align*}
Observe the cost terms form an exact telescoping sum
\[
\sum_{t=1}^T
\Big(
C^{\mathrm{mix}}(q_t;w_{t+1})
-
C^{\mathrm{mix}}(q_{t-1};w_t)
\Big)
=
C^{\mathrm{mix}}(q_T;w_{T+1})
-
C^{\mathrm{mix}}(q_0;w_1),
\]
Adding the cumulative fee $\sum_{t=1}^T \fee_t$ back to this result 
\[
\sum_{t=1}^T \Pay_t^{\mathrm{mix}}(r_t)
=
C^{\mathrm{mix}}(q_T;w_{T+1})
-
C^{\mathrm{mix}}(q_0;w_1)
+
\sum_{t=1}^T \fee_t.
\]
Lastly, to establish the cumulative lower bound, we apply the uniform domination inequality (Step 2) specifically at $q = q_T$, we have
\[
C^{\mathrm{mix}}(q_T;w_{T+1})
+
\sum_{t=1}^T \fee_t
\ge
C^{\mathrm{mix}}(q_T;w_1).
\]
Since $\fee_t \ge \Delta_t^\star$ for all $t$, substituting the fees yields
\[C^{\text{mix}}(q_T; w_{T+1}) + \sum_{t=1}^T \fee_t \ge C^{\text{mix}}(q_T; w_1).\] 
Substituting this lower bound into our exact telescoping equality yields
\[\sum_{t=1}^T \Pay_t^{\mathrm{mix}}(r_t)
\ge
C^{\mathrm{mix}}(q_T;w_1)
-
C^{\mathrm{mix}}(q_0;w_1),
\]
which proves \eqref{eq:pay-lowerbound-fee} and completes the proof.
\end{proof}

We provide the following proofs regarding the no-arbitrage guarantees and worst-case loss bounds of the mixed payment rule. We recall Assumption 1 and formalize its relationship to the finite outcome space $\mathcal{O}$. Assumption 1 guarantees that the gradients of the expert cost functions always reside within the convex hull $H(\rho(\mathcal{O}))$.
\noarb* 
\begin{proof}
By Assumption 1, the gradients of the expert cost functions lie within the convex hull of the outcome payoff vectors $\nabla C_k(q) \in H(\rho(\mathcal{O}))$ for all $k$. For any fixed $w\in\Delta_M$, the mixed potential $C^{\mathrm{mix}}(\cdot;w)$ is convex and
differentiable, and its gradient is a convex combination of expert gradients:
\begin{equation}
\nabla_q C^{\mathrm{mix}}(q;w)
=
\sum_{k=1}^M \pi_k(q;w)\,\nabla C_k(q),
\qquad
\pi_k(q;w)
=
\frac{w(k)e^{\beta C_k(q)}}{\sum_j w(j)e^{\beta C_j(q)}}
\in\Delta_M.
\label{eq:grad-mix}
\end{equation}
Because $\nabla_q C^{\text{mix}}(q; w)$ is a convex combination of vectors that reside in $H(\rho(\mathcal O))$, the mixed gradient itself remains in the convex hull $H(\rho(\mathcal O))$
\[\nabla_q C^{\mathrm{mix}}(q;w)\in H(\rho(\mathcal O))
\quad \forall q,\ \forall w.\]
Fix any trade sequence of length $T$ and denote the net change in shares as $\Sigma = q_T - q_0$.
By Theorem~\ref{thm:switch-fee-domination}, the cumulative payments are bounded below by the difference in the initial potential
\begin{equation}
\sum_{t=1}^T \Pay_{t}^{\mathrm{mix}}(r_t)
\ge
C^{\mathrm{mix}}(q_T;w_1)-C^{\mathrm{mix}}(q_0;w_1).
\label{eq:pay-dominate-fixedmaster}
\end{equation}
Since $C^{\mathrm{mix}}(\cdot;w_1)$ is differentiable, applying the fundamental theorem of calculus along the line segment
$q_0+\tau\Sigma$ for $\tau\in[0,1]$, gives
\begin{equation}
C^{\mathrm{mix}}(q_T;w_1)-C^{\mathrm{mix}}(q_0;w_1)
=
\int_0^1
\nabla_q C^{\mathrm{mix}}(q_0+\tau\Sigma;w_1)\cdot \Sigma\, d\tau.
\label{eq:ftc-line}
\end{equation}
For each $\tau$, the vector $\nabla_q C^{\text{mix}}(q_0 + \tau \Sigma; w_1)$ lies in $H(\rho(\mathcal O))$. Thus, we express the vector as a convex combination of the outcome payoffs, $\sum_{o \in \mathcal O} \alpha_o \rho(o)$, where $\alpha_o \ge 0$ and $\sum_{o \in \mathcal O} \alpha_o = 1$. Consequently, its inner product with the net trade $\Sigma$ is strictly bounded below by the minimum possible inner product over all individual outcomes:
\begin{equation}
\nabla_q C^{\mathrm{mix}}(q_0+\tau\Sigma;w_1)\cdot \Sigma = \sum_{o \in \mathcal O} \alpha_o \left(\rho(o) \cdot \Sigma\right)
\ge \min_{o\in\mathcal O}\rho(o)\cdot \Sigma
\quad \forall \tau\in[0,1].
\label{eq:hull-min}
\end{equation}
Integrating this uniform lower bound from \eqref{eq:hull-min} over $\tau\in[0,1]$ and using \eqref{eq:ftc-line} yields
\[
C^{\mathrm{mix}}(q_T;w_1)-C^{\mathrm{mix}}(q_0;w_1)
\ge
\min_{o\in\mathcal O}\rho(o)\cdot \Sigma.
\]
Combining this with the telescoping bound from \eqref{eq:pay-dominate-fixedmaster} gives
\[
\sum_{t=1}^T \Pay_{t}^{\mathrm{mix}}(r_t)
\ge
\min_{o\in\mathcal O}\rho(o)\cdot \Sigma.
\]
Choosing the worst-case realized outcome $o^\star \in \arg\min_{o\in\mathcal O} \rho(o) \cdot \Sigma$ establishes that the total payment strictly bounds the worst-case payout, proving no-arbitrage. 
\end{proof}


\bwcl* 
\begin{proof} 
Fix any horizon $T$, trade sequence $(r_t)$, and outcome $o \in \mathcal O$.

We first show that we can upper bound the total loss in terms of a bound on the payment sum.
By definition of market loss and the cumulative lower bound established in Theorem \ref{thm:switch-fee-domination}, we have
\begin{align}
\Loss_T(o)
&=
\rho(o)\cdot(q_T-q_0)
-
\sum_{t=1}^T \Pay_{t}^{\mix}(r_t)
\nonumber\\
&\le
\rho(o)\cdot(q_T-q_0)
-
\big(
C^{\mix}(q_T;w_1)
-
C^{\mix}(q_0;w_1)
\big).
\nonumber
\end{align}
Rearranging gives
\begin{equation}
\Loss_T(o)
\le
\underbrace{\big(\rho(o)\cdot q_T - C^{\mix}(q_T;w_1)\big)}_{(A)}
+
\underbrace{\big(C^{\mix}(q_0;w_1)-\rho(o)\cdot q_0\big)}_{(B)}.
\label{eq:loss-split-explicit}
\end{equation}

We first bound term (A).
Let $w_1\in\Delta_M$.
Since $\sum_{k=1}^M w_1(k)=1$, there exists at least one index $k^\star$ such that $w_1(k^\star)\ge 1/M$ via pigeonhole argument. For any state $q \in \mathbb{R}^d$, we can lower-bound the mixed potential
\begin{align}
C^{\mix}(q;w_1)
&=
\frac{1}{\beta}
\log\!\Big(\sum_{k=1}^M w_1(k)e^{\beta C_k(q)}\Big)
\nonumber\\
&\ge
\frac{1}{\beta}
\log\!\Big(w_1(k^\star)e^{\beta C_{k^\star}(q)}\Big)
\nonumber\\
&=
C_{k^\star}(q)
+
\frac{1}{\beta}\log w_1(k^\star)
\nonumber\\
&\ge
C_{k^\star}(q)
-
\frac{1}{\beta}\log M.
\label{eq:cmix-lower-bound}
\end{align}
Therefore, we have
\begin{align}
\rho(o)\cdot q
-
C^{\mix}(q;w_1)
&\le
\rho(o)\cdot q
-
C_{k^\star}(q)
+
\frac{1}{\beta}\log M.
\nonumber
\end{align}
By the definition of the convex conjugate (Fenchel-Young inequality), we know
$$\rho(o) \cdot q - C_{k^\star}(q) \le C_{k^\star}^*(\rho(o)).$$ 
Then for all $q \in \mathbb{R}^d$ and $o \in \mathcal{O}$, term (A) is uniformly bounded
\[
(A)
\le
C_{k^\star}^*(\rho(o))
+
\frac{1}{\beta}\log M
\le
\max_{k\in[M]} B_k
+
\frac{1}{\beta}\log M.
\]

We now bound term (B).
Term (B) depends strictly on the initial state and can be bounded by taking the worst-case outcome
\[
(B)
=
C^{\mix}(q_0;w_1)
-
\rho(o)\cdot q_0
\le
C^{\mix}(q_0;w_1)
-
\min_{o\in\mathcal O}\rho(o)\cdot q_0.
\]

We now combine our bounds.
Substituting the uniform bounds for (A) and (B) back into \eqref{eq:loss-split-explicit} yields
\[
\Loss_T(o)
\le
\max_{k\in[M]} B_k
+
\frac{1}{\beta}\log M
+
C^{\mix}(q_0;w_1)
-
\min_{o\in\mathcal O}\rho(o)\cdot q_0.
\]
This upper bound is independent of the time horizon $T$, the specific trade sequence $(r_t)$, and the realized outcome $o$. Taking the supremum over all such quantities yields the stated bound. 
\end{proof}

\expressivness* 
\begin{proof} 
Define the convex conjugate of the mixed potential as
\[R^{\mix}(x):=(C^{\mix})^*(x)
=
\sup_{q\in\mathbb R^d}\bigl\{q\cdot x - C^{\mix}(q;w)\bigr\}.
\]
Since $C^{\text{mix}}$ is finite and convex on $\mathbb{R}^d$, its conjugate $R^{\text{mix}}$ is proper, closed, and convex. Furthermore, we let $R_k(x) := C_{\eta_k}^*(x)$ denote the conjugate of the $k$-th expert. Under the scaled-family construction, all experts share the same conjugate domain: $\text{dom}(R_k) = \text{dom}(C^*) = H(\rho(\mathcal{O}))$, which we abbreviate as $H$. We recall that
\[R^{\mix}(x)
=
\sup_{q \in \mathbb{R}^d}
\{ q \cdot x - C^{\mix}(q; w) \},
\qquad
\dom(R^{\mix}) = \{x : R^{\mix}(x) < \infty\}.\]
 We first prove $\dom(R^{\mix}) = H$ by showing inclusion in both directions.
\paragraph{(i) $H \subseteq \dom(R^{\mix})$.}
Fix any $k \in [M]$. Since the log-sum-exp dominates each term, we have
\[
C^{\mix}(q; w)
\;\ge\;
C_k(q) + \frac{1}{\beta}\log w(k).
\]
Rearranging gives
\[
q \cdot x - C^{\mix}(q; w)
\;\le\;
q \cdot x - C_k(q) - \frac{1}{\beta}\log w(k).
\]
Taking the supremum over $q$ yields
\[
R^{\mix}(x)
\;\le\;
R_k(x) - \frac{1}{\beta}\log w(k).
\]
Fix any $x \in H = \dom(R_k)$. Then $R_k(x) < \infty$, and since $w(k) > 0$, the constant $-\frac{1}{\beta}\log w(k)$ is finite which makes the right-hand side is finite, so $R^{\mix}(x) < \infty$.
By definition of the domain, $x \in \dom(R^{\mix})$, proving $H \subseteq \dom(R^{\mix})$.

\paragraph{(ii) $\dom(R^{\mix}) \subseteq H$.} Suppose $R^{\mix}(x) < \infty$. Using the upper bound for the log-sum-exp function, for all $q \in \mathbb{R}^d$
$$C^{\text{mix}}(q; w) \le \frac{1}{\beta}\log\left(M \max_k e^{\beta C_k(q)}\right) = \max_k C_k(q) + \frac{1}{\beta}\log M.$$
Let $C_{\max}(q) := \max_k C_k(q)$. Substituting this upper bound into the definition of $R^{\text{mix}}$ gives a lower bound on the conjugate
$$R^{\text{mix}}(x) \ge \sup_q \{ q \cdot x - C_{\max}(q) \} - \frac{1}{\beta}\log M = C_{\max}^*(x) - \frac{1}{\beta}\log M.$$
Suppose $x \in \text{dom}(R^{\text{mix}})$, meaning $R^{\text{mix}}(x) < \infty$. The inequality above forces $C_{\max}^*(x) < \infty$. By the properties of convex conjugates, the domain of the conjugate of a finite maximum of convex functions is the convex hull of the union of their individual conjugate domains. Then we have
$$\text{dom}(C_{\max}^*) = \text{conv}\left(\bigcup_{k=1}^M \text{dom}(R_k)\right) = \text{conv}(H) = H.$$
Thus, $C_{\max}^*(x) < \infty \implies x \in H$, completing the proof that $\text{dom}(R^{\text{mix}}) = H$.

We now show subdifferentiability in the relative interior.
Fix any interior belief \(\mu\in\operatorname{ri}(H)\). Since \(\dom(R^{\mix})=H\), we have
$\mu\in\operatorname{ri}(\dom(R^{\mix}))$.
Because \(R^{\mix}\) is proper and convex, 
(\cite{rockafellar1997convex}, Thm 23.4) guarantees that the subdifferential is non-empty, that is, $\partial R^{\text{mix}}(\mu) \neq \emptyset$. Thus, we can choose any $q \in \partial R^{\text{mix}}(\mu)$.
By (\cite{rockafellar1997convex}, Thm 23.5 / Corr 23.5.1), the subgradients of a function and its conjugate are inverses of each other
\[
q\in \partial R^{\mix}(\mu)
\quad\Longleftrightarrow\quad
\mu\in \partial C^{\mix}(q;w).
\]
By Lemma \ref{lem:master-grad-hess-varcov}, \(C^{\mix}(\cdot;w)\) is everywhere differentiable on \(\mathbb R^d\).
Consequently, its subdifferential at any $q$ is the singleton containing only the gradient
\[\partial C^{\mix}(q;w)=\{\nabla C^{\mix}(q;w)\}.\]
Thus, we have
\[
\mu\in \partial C^{\mix}(q;w)
\quad\Longrightarrow\quad
\mu=\nabla C^{\mix}(q;w).
\]

Since \(\mu\in\operatorname{ri}(H)\) was arbitrary, we conclude that for every interior belief \(\mu\) there exists a market state \(q\) with
$\nabla C^{\mix}(q;w)=\mu$. This proves expressiveness.
\end{proof}


\positiveupside*
\begin{proof}
Fix a round \(t\), an outcome \(o \in \mathcal O\), and a feasible
direction \(v\). Consider a trade of the form \(r = s v\) for \(s>0\).
We observe that the adaptive payment rule can be written as
\[
\Pay_t^{\mix}(r)
=
C^{\mix}(q_{t-1}+r;w_t)
-
C^{\mix}(q_{t-1};w_t)
+
\Delta_t^{\dyn}(r)
+
\fee_t,
\]
where the \emph{dynamic distortion} is defined as
\[
\Delta_t^{\dyn}(r)
:=
C^{\mix}(q_{t-1}+r;w_{t+1})
-
C^{\mix}(q_{t-1}+r;w_t).
\]
This term captures the change in the potential evaluated at the
post-trade state due to updating the mixture weights. We now analyze the
realized profit in outcome \(o\in\mathcal O\),
\[
\rho(o)\cdot (s v)-\Pay_t^{\mix}(s v).
\]

By local smoothness of the frozen potential
\(q\mapsto C^{\mix}(q;w_t)\), there exists a constant \(L_t\ge0\) such
that for all sufficiently small \(s>0\),
\[
C^{\mix}(q_{t-1}+s v;w_t)
\le
C^{\mix}(q_{t-1};w_t)
+
p_t\cdot (s v)
+
\frac{L_t}{2}\|v\|^2 s^2,
\]
where $p_t
=
\nabla C^{\mix}(q_{t-1};w_t)$.

This follows from a first-order Taylor expansion of
\(C^{\mix}(\cdot;w_t)\) at \(q_{t-1}\), with smoothness controlling the
second-order remainder.

Substituting the payment decomposition and applying the smoothness bound
yields
\begin{align*}
\rho(o)\cdot (s v)-\Pay_t^{\mix}(s v)
&=
\rho(o)\cdot (s v)
-
\Big(
C^{\mix}(q_{t-1}+s v;w_t)
-
C^{\mix}(q_{t-1};w_t)
\Big)
\\
&\quad
-
\Delta_t^{\dyn}(s v)
-
\fee_t
\\
&\ge
\rho(o)\cdot (s v)
-
p_t\cdot (s v)
-
\frac{L_t}{2}\|v\|^2 s^2
\\
&\quad
-
\big|\Delta_t^{\dyn}(s v)\big|
-
\fee_t.
\end{align*}

By the assumed stability of updates, there exist nonnegative constants
\(F_{0,t},F_{1,t},F_{2,t}\) such that
\[
\big|\Delta_t^{\dyn}(s v)\big|
+
\fee_t
\le
F_{0,t}
+
F_{1,t}\|v\| s
+
F_{2,t}\|v\|^2 s^2.
\]
Substituting this into the profit bound gives
\[
\rho(o)\cdot (s v)-\Pay_t^{\mix}(s v)
\ge
-
F_{0,t}
+
s(\rho(o)-p_t)\cdot v
-
F_{1,t}\|v\| s
-
\Big(
\frac{L_t}{2}
+
F_{2,t}
\Big)\|v\|^2 s^2.
\]

Define
\[
A
:=
(\rho(o)-p_t)\cdot v
-
F_{1,t}\|v\|, \quad B
:=
\Big(
\frac{L_t}{2}
+
F_{2,t}
\Big)\|v\|^2.
\]
Then the realized profit satisfies
\[
\rho(o)\cdot (s v)-\Pay_t^{\mix}(s v)
\ge
-
F_{0,t}
+
sA
-
s^2B.
\]

If \(A>0\) and
$A^2
>
4BF_{0,t}$,
then the quadratic $-
F_{0,t}
+
sA
-
s^2B$
has two distinct positive roots
$0<s_{\min}<s_{\max}$,
and is strictly positive on the interval
$s\in(s_{\min},s_{\max})$.
Therefore,
\[
\rho(o)\cdot (s v)-\Pay_t^{\mix}(s v) > 0
\qquad
\forall\, s\in(s_{\min},s_{\max}),
\]
which proves positive directional upside.
\end{proof}



\section{Proofs and Additional Results from Section 5}
\label{app:slippage-liability}
For each expert $k$ and round $t$, define the \emph{slippage} (Bregman divergence) for trade $r_t = q_t - q_{t-1}$ as
\[
S_{k,t}
:=
D_{C_k}(q_t,q_{t-1})
=
C_k(q_t)-C_k(q_{t-1})-\nabla C_k(q_{t-1})\cdot r_t,
\]
and the liability as
\[
L_k(q):=\max_{o}\rho(o)\cdot q -\bigl( C_k(q) - C_k(0)\bigr).
\]

The hybrid signal is
\[
\Gamma^{\hyb}_{k,t}
=
a\Big(S_{k,t}-\frac{1}{M}\sum_{j=1}^M S_{j,t}\Big)
+
b L_k(q_t),
\quad a,b>0.
\]

Define the surrogate signal
\[
\Gamma_t^{\surr}
:=
\sum_{k=1}^M w_t(k)\Gamma^{\hyb}_{k,t}.
\]

Define the mixed signal
\[
\Gamma^{\hyb}_{\mix,t}
=
a\Big(
D_{C_{\mix,t}}(q_t,q_{t-1})
-
\frac{1}{M}\sum_{j=1}^M S_{j,t}
\Big)
+
b L_{\mix,t}(q_t)
.
\]

At round \(t\), define the mixed cost
\[
C_{\mix,t}(q)
:=
C^{\mix}(q;w_t)
=
\frac{1}{\beta}
\log\!\Big(\sum_{k=1}^M w_t(k)\,e^{\beta C_k(q)}\Big).
\]
The mixed slippage is the Bregman divergence of \(C_{\mix,t}\)
\[
D_{C_{\mix,t}}(q_t,q_{t-1})
=
C_{\mix,t}(q_t)
-
C_{\mix,t}(q_{t-1})
-
\nabla C_{\mix,t}(q_{t-1})\cdot (q_t-q_{t-1}),\]
where the mixed gradient is the posterior-weighted sum of expert gradients
\[
\nabla C_{\mix,t}(q)
=
\sum_{k=1}^M \pi_t(k;q)\,\nabla C_k(q),
\quad
\pi_t(k;q)
=
\frac{w_t(k)e^{\beta C_k(q)}}{\sum_{j=1}^M w_t(j)e^{\beta C_j(q)}}.
\]

The mixed liability is
\[
L_{\mix,t}(q)
:=
\max_{o}
\rho(o)\cdot q - \bigl(C_{\mix,t}(q) - C_{\mix,t}(0) \bigr).
\]

We evaluate all quantities with the \emph{frozen} mixture \(C_{\mix,t}\), that is, the learning weights \(w_t\) are held fixed within round \(t\).

\begin{lemma}[Slippage bound]
\label{lem:hybrid-slippage}
For every round \(t\),
\begin{align*}
D_{C_{\mix,t}}(q_t,q_{t-1})
&\le
\sum_{k=1}^M \pi_t(k;q_t)\, S_{k,t} 
+
\sum_{k=1}^M
(\pi_t(k;q_t)-\pi_t(k;q_{t-1}))
\,\nabla C_k(q_{t-1})\cdot r_t.
\end{align*}
\end{lemma}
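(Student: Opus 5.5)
The plan is to bound the increment of the frozen mixture by a posterior-weighted increment of the experts, using the Gibbs variational form of log-sum-exp. The right weighting is the posterior at the \emph{endpoint} $q_t$; after that, the stated inequality is pure bookkeeping.

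\textbf{Step 1 (variational identity).} For fixed $q$ and full-support $w_t$, the Donsker--Varadhan/Gibbs variational principle gives
\[
C_{\mix,t}(q)=\max_{\nu\in\Delta_M}\Big\{\sum_{k=1}^M \nu(k)\,C_k(q)+\frac1\beta\sum_{k=1}^M \nu(k)\log\frac{w_t(k)}{\nu(k)}\Big\}.
\]
The maximum is attained exactly at $\nu=\pi_t(\cdot;q)$. I would justify this in one line: for any $\nu$, the gap between the value at $\pi_t(\cdot;q)$ and the value at $\nu$ equals $\frac1\beta\,\mathrm{KL}(\nu\,\|\,\pi_t(\cdot;q))\ge 0$. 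This follows by substituting $\log \pi_t(k;q)=\log w_t(k)+\beta C_k(q)-\beta C_{\mix,t}(q)$.

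\textbf{Step 2 (increment bound).} I apply the identity at $q_t$ with its optimizer $\pi_t(\cdot;q_t)$, which gives an equality. At $q_{t-1}$ I use the suboptimal choice $\nu=\pi_t(\cdot;q_t)$, which gives a lower bound. The entropy terms are identical in the two expressions and cancel, so
\[
C_{\mix,t}(q_t)-C_{\mix,t}(q_{t-1})\le \sum_{k=1}^M \pi_t(k;q_t)\big(C_k(q_t)-C_k(q_{t-1})\big).
\]

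\textbf{Step 3 (bookkeeping).} Write $C_k(q_t)-C_k(q_{t-1})=S_{k,t}+\nabla C_k(q_{t-1})\cdot r_t$ by the definition of $S_{k,t}$. Then subtract the linear term of the mixed Bregman divergence. By Lemma~\ref{lem:master-grad-hess-varcov}, this term is $\nabla C_{\mix,t}(q_{t-1})\cdot r_t=\sum_k \pi_t(k;q_{t-1})\nabla C_k(q_{t-1})\cdot r_t$. Grouping the two gradient sums yields
\[
D_{C_{\mix,t}}(q_t,q_{t-1})\le \sum_k \pi_t(k;q_t)S_{k,t}+\sum_k\big(\pi_t(k;q_t)-\pi_t(k;q_{t-1})\big)\nabla C_k(q_{t-1})\cdot r_t,
\]
which is the stated inequality.

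\textbf{Main obstacle.} The only nontrivial choice is Step 2, namely evaluating the variational form with the endpoint posterior $\pi_t(\cdot;q_t)$ rather than $\pi_t(\cdot;q_{t-1})$. That choice is what produces the drift term with the sign and weights in the statement. The variational identity itself is standard and only needs $w_t$ to have full support, so that all logarithms are finite.
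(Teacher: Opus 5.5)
Your proof is correct and is essentially the paper's argument. The paper gets the same increment bound $C_{\mix,t}(q_t)-C_{\mix,t}(q_{t-1})\le\sum_k\pi_t(k;q_t)\bigl(C_k(q_t)-C_k(q_{t-1})\bigr)$ from the first-order convexity inequality for weighted log-sum-exp at the endpoint $(C_k(q_t))_k$, whose gradient is $\pi_t(\cdot;q_t)$; your Gibbs-variational step with the endpoint optimizer is the conjugate-dual form of that same supporting-hyperplane inequality, and your bookkeeping matches the paper's.
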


\begin{proof}
By definition of the Bregman divergence for the mixed potential
\[D_{C_{\mix,t}}(q_t,q_{t-1})
=C_{\mix,t}(q_t)-C_{\mix,t}(q_{t-1})
-\nabla C_{\mix,t}(q_{t-1})\cdot r_t.\]
Substituting the mixed gradient $\nabla C_{\mix,t}(q)=\sum_{k=1}^M \pi_t(k;q)\,\nabla C_k(q)$ yields
\[D_{C_{\mix,t}}(q_t,q_{t-1})
=C_{\mix,t}(q_t)-C_{\mix,t}(q_{t-1})
-\sum_{k=1}^M \pi_t(k;q_{t-1})\,\nabla C_k(q_{t-1})\cdot r_t.\]
We now bound the difference in the cost functions. Recall that $C_{\text{mix},t}$ is constructed using the log-sum-exp function on expert costs. Because the log-sum-exp function is globally convex with respect to its inputs and its gradient is the softmax distribution $\pi_t(k; q)$, applying the first-order convexity condition $F(y) - F(x) \le \nabla F(y) \cdot (y - x)$ yields
$$C_{\text{mix},t}(q_t) - C_{\text{mix},t}(q_{t-1}) \le \sum_{k=1}^M \pi_t(k; q_t) \left( C_k(q_t) - C_k(q_{t-1}) \right).$$
Substituting this upper bound into our divergence expression
$$D_{C_{\text{mix},t}}(q_t, q_{t-1}) \le \sum_{k=1}^M \pi_t(k; q_t) \left( C_k(q_t) - C_k(q_{t-1}) \right) - \sum_{k=1}^M \pi_t(k; q_{t-1}) \nabla C_k(q_{t-1}) \cdot r_t.$$
We add and subtract the term $\sum_{k=1}^M \pi_t(k; q_t) \nabla C_k(q_{t-1}) \cdot r_t$ to align the inner product with the posterior weights at $q_t$:
\begin{align*}D_{C_{\text{mix},t}}(q_t, q_{t-1}) \le \sum_{k=1}^M \pi_t(k; q_t) \left[ C_k(q_t) - C_k(q_{t-1}) - \nabla C_k(q_{t-1}) \cdot r_t \right] & \\\quad + \sum_{k=1}^M \left( \pi_t(k; q_t) - \pi_t(k; q_{t-1}) \right) \nabla C_k(q_{t-1}) \cdot r_t.\end{align*}
Recognizing the bracketed term as the expert slippage $S_{k,t} = C_k(q_t) - C_k(q_{t-1}) - \nabla C_k(q_{t-1}) \cdot r_t$, the result immediately follows.
\end{proof}


\begin{lemma}[Liability-level bound]
\label{lem:hybrid-liability}
Assume without loss of generality that the candidate cost functions are normalized such that the initial state cost is zero, i.e., $C_k(0) = 0$ for all $k$. For every round \(t\) and state \(q_t\),
$
L_{\mix,t}(q_t)
\le
\sum_{k=1}^M w_t(k)\,L_k(q_t)$.
\end{lemma}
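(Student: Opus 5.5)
The plan is to reduce the claim to a single Jensen-type inequality for the log-sum-exp mixture. The maximization term $\max_{o}\rho(o)\cdot q_t$ does not depend on the expert index. It therefore appears identically in $L_{\mix,t}(q_t)$ and in each $L_k(q_t)$, and all the work is in comparing the cost terms.

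\emph{Step 1: the normalization collapses.} Under the normalization $C_k(0)=0$ for all $k$, I compute
\[
C_{\mix,t}(0)=\frac1\beta\log\Big(\sum_{k=1}^M w_t(k)e^{0}\Big)=\frac1\beta\log 1=0,
\]
using $w_t\in\Delta_M$. Hence
\[
L_{\mix,t}(q_t)=\max_{o}\rho(o)\cdot q_t-C_{\mix,t}(q_t),
\qquad
L_k(q_t)=\max_{o}\rho(o)\cdot q_t-C_k(q_t).
\]

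\emph{Step 2: lower-bound the mixed cost by the $w_t$-average of the experts.} This is the only step with real content. Since $\log$ is concave and $w_t$ is a probability vector, Jensen's inequality gives
\[
\log\Big(\sum_{k=1}^M w_t(k)e^{\beta C_k(q_t)}\Big)\ \ge\ \sum_{k=1}^M w_t(k)\log e^{\beta C_k(q_t)}=\beta\sum_{k=1}^M w_t(k)C_k(q_t).
\]
Dividing by $\beta>0$ yields $C_{\mix,t}(q_t)\ge\sum_k w_t(k)C_k(q_t)$. Equivalently, this is the weighted AM--GM inequality applied to the numbers $e^{\beta C_k(q_t)}$.

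\emph{Step 3: conclude.} Substituting Step 2 into the expression from Step 1 gives
\[
L_{\mix,t}(q_t)\le \max_{o}\rho(o)\cdot q_t-\sum_{k=1}^M w_t(k)C_k(q_t).
\]
Because $\sum_k w_t(k)=1$ and the max term is independent of $k$, I write it as $\sum_k w_t(k)\max_o\rho(o)\cdot q_t$. The right-hand side is then exactly $\sum_k w_t(k)\,L_k(q_t)$, which is the claim.

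I do not expect a genuine obstacle. The only point needing care is that the inequality must hold with the learner weights $w_t$ and not the posterior weights $\pi_t$. Jensen's inequality with respect to $w_t$ delivers precisely this, so no mismatch term of the kind appearing in Lemma~\ref{lem:hybrid-slippage} arises.

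Without the normalization, the same argument still works. The extra offsets $C_{\mix,t}(0)$ and $\sum_k w_t(k)C_k(0)$ satisfy the same Jensen inequality, but they enter with the opposite sign, so one would get an additive correction. This is why the normalization is imposed.
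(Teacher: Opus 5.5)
Your proof is correct and matches the paper's argument. Both use Jensen's inequality (you via concavity of $\log$, the paper via convexity of $u\mapsto e^{\beta u}$) to get $C_{\mix,t}(q_t)\ge\sum_k w_t(k)C_k(q_t)$, use $C_k(0)=0$ to get $C_{\mix,t}(0)=0$, and then average over $k$. Your observation that $\max_o\rho(o)\cdot q_t$ does not depend on $k$ is slightly cleaner than the paper's last step, which invokes $\max_o\sum_k w_t(k)f_k(o)\le\sum_k w_t(k)\max_o f_k(o)$, an inequality that holds with equality here.
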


\begin{proof}

We prove the claim in two steps.

We first show a lower bound on the mixed cost increment.
Consider the function
\[
F(x_1,\dots,x_M)
:=
\frac{1}{\beta}
\log\!\Big(\sum_{k=1}^M w_t(k)e^{\beta x_k}\Big).
\]
This is the weighted log-sum-exp map, so it is convex. By Jensen's inequality applied to the convex function \(u\mapsto e^{\beta u}\), we have
\[
\sum_{k=1}^M w_t(k)e^{\beta x_k}
\ge
\exp\!\Big(\beta \sum_{k=1}^M w_t(k)x_k\Big).
\]
Taking \(\frac{1}{\beta}\log\) of both sides yields
\[
F(x_1,\dots,x_M)
\ge
\sum_{k=1}^M w_t(k)x_k.
\]

Now apply this with \(x_k=C_k(q_t)\). Then
\[
C_{\mix,t}(q_t)
=
\frac{1}{\beta}
\log\!\Big(\sum_{k=1}^M w_t(k)e^{\beta C_k(q_t)}\Big)
\ge
\sum_{k=1}^M w_t(k)C_k(q_t).
\]
Applying the same inequality at \(q=0\) gives
\[
C_{\mix,t}(0)
=
\frac{1}{\beta}
\log\!\Big(\sum_{k=1}^M w_t(k)e^{\beta C_k(0)}\Big)
\ge
\sum_{k=1}^M w_t(k)C_k(0).\]
By our assumption, $C_k(0) = 0$ for all $k$, which implies $C_{\text{mix},t}(0) = \frac{1}{\beta}\log(\sum_{k=1}^M w_t(k)e^0) = 0$. Subtracting these zero-valued terms from their respective sides preserves the inequality:
\[
C_{\mix,t}(q_t)-C_{\mix,t}(0)
\ge
\sum_{k=1}^M w_t(k)\bigl(C_k(q_t)-C_k(0)\bigr).
\]

Now, we show the relation to liability.
Fix any outcome \(o \in \mathcal{O}\). Multiplying the inequality from the previous part by $-1$ and adding $\rho(o) \cdot q_t$ to both sides gives
\begin{align*}
\rho(o)\cdot q_t - \bigl(C_{\mix,t}(q_t)-C_{\mix,t}(0)\bigr)
&\le
\rho(o)\cdot q_t
-
\sum_{k=1}^M w_t(k)\bigl(C_k(q_t)-C_k(0)\bigr) \\
&=
\sum_{k=1}^M w_t(k)
\Bigl(
\rho(o)\cdot q_t - \bigl(C_k(q_t)-C_k(0)\bigr)
\Bigr).
\end{align*}
where the equality holds because the weights $w_t(k)$ sum to one. Since this bound holds for every outcome, taking the maximum over $o \in \mathcal O$ on the left side yields the definition of the mixed liability
\[
L_{\mix,t}(q_t)
=
\max_{o}
\Bigl\{
\rho(o)\cdot q_t - \bigl(C_{\mix,t}(q_t)-C_{\mix,t}(0)\bigr)
\Bigr\}
\]
\[
\le
\max_{o}
\sum_{k=1}^M w_t(k)
\Bigl(
\rho(o)\cdot q_t - \bigl(C_k(q_t)-C_k(0)\bigr)
\Bigr).
\]
We now use the elementary inequality $\max_{o}\sum_{k=1}^M w_t(k)f_k(o)
\le
\sum_{k=1}^M w_t(k)\max_{o} f_k(o)$, which holds true for any non-negative weights \(w_t(k)\). Setting $f_k(o) = \rho(o) \cdot q_t - (C_k(q_t) - C_k(0))$, we get:
\begin{align*}
L_{\mix,t}(q_t)
&\le
\sum_{k=1}^M w_t(k)
\max_{o}
\Bigl\{
\rho(o)\cdot q_t - \bigl(C_k(q_t)-C_k(0)\bigr)
\Bigr\} =
\sum_{k=1}^M w_t(k)\,L_k(q_t).
\end{align*}
\end{proof}


\begin{theorem}[Hybrid bridge inequality]
\label{thm:hybrid-bridge-slippage}
For each round $t$,
\[
\Gamma^{\hyb}_{\mix,t}
\le
\Gamma_t^{\surr}
+
a E^{\slip,1}_t
+
a E^{\slip,2}_t,
\]
where
\[
E^{\slip,1}_t
=
\sum_{k=1}^M (\pi_t(k;q_t)-w_t(k)) S_{k,t},
\]
\[
E^{\slip,2}_t
=
\sum_{k=1}^M (\pi_t(k;q_t)-\pi_t(k;q_{t-1}))
\,\nabla C_k(q_{t-1})\cdot r_t.
\]
\end{theorem}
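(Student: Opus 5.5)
The plan is to split $\Gamma^{\hyb}_{\mix,t}$ into its slippage part and its liability part, bound each with the two preceding lemmas, and then reorganize the result around the surrogate signal. Write
\[
\Gamma^{\hyb}_{\mix,t}
=
a\Big(D_{C_{\mix,t}}(q_t,q_{t-1})-\tfrac1M\textstyle\sum_j S_{j,t}\Big)
+ b\,L_{\mix,t}(q_t).
\]
For the liability term, Lemma~\ref{lem:hybrid-liability} gives $L_{\mix,t}(q_t)\le\sum_k w_t(k)L_k(q_t)$. This uses the normalization $C_k(0)=0$ assumed there, which I would adopt here as well. Since $b>0$, multiplying through preserves the inequality.

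For the slippage term, Lemma~\ref{lem:hybrid-slippage} gives
\[
D_{C_{\mix,t}}(q_t,q_{t-1})\le\sum_k\pi_t(k;q_t)S_{k,t}+E^{\slip,2}_t.
\]
I would then add and subtract $\sum_k w_t(k)S_{k,t}$, which rewrites the first sum as $\sum_k w_t(k)S_{k,t}+E^{\slip,1}_t$. Multiplying by $a>0$ and subtracting the centering term $\frac{a}{M}\sum_j S_{j,t}$ from both sides leaves everything intact.

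To recombine, I need $\sum_k w_t(k)=1$. This lets me write the centering term as $\sum_k w_t(k)\cdot\frac1M\sum_j S_{j,t}$, so that
\[
a\Big(\textstyle\sum_k w_t(k)S_{k,t}-\frac1M\sum_j S_{j,t}\Big)+b\sum_k w_t(k)L_k(q_t)
=\sum_k w_t(k)\Gamma^{\hyb}_{k,t}=\Gamma^{\surr}_t.
\]
Adding the two bounds then yields exactly $\Gamma^{\hyb}_{\mix,t}\le\Gamma^{\surr}_t+aE^{\slip,1}_t+aE^{\slip,2}_t$.

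None of these steps is hard, since the substantive work lives in the two lemmas. The step needing the most care is this last recombination: the centered average in the definition of $\Gamma^{\hyb}_{k,t}$ only collapses to $\Gamma^{\surr}_t$ because the weights $w_t$ sum to one. I would also state the normalization assumption on $C_k(0)$ explicitly.
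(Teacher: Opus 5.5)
Your proposal is correct and follows essentially the same route as the paper: bound the mixed slippage via Lemma~\ref{lem:hybrid-slippage} and the mixed liability via Lemma~\ref{lem:hybrid-liability}, then compare with the expansion of $\Gamma^{\surr}_t$ so that the residual $\sum_k(\pi_t(k;q_t)-w_t(k))S_{k,t}=E^{\slip,1}_t$ appears. Your use of $\sum_k w_t(k)=1$ to absorb the centering term, and your flagging of the $C_k(0)=0$ normalization, make explicit what the paper's proof uses implicitly.
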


\begin{proof}
By combining the slippage bound (Lemma~\ref{lem:hybrid-slippage}) and the liability-level bound (Lemma~\ref{lem:hybrid-liability}) with the definition of the mixed hybrid signal, we get
\begin{align*}
\Gamma^{\hyb}_{\mix,t}
&\le
a\Big(
\sum_{k=1}^M \pi_t(k;q_t) S_{k,t}
-
\frac{1}{M}\sum_j S_{j,t}
\Big)
+
b\sum_{k=1}^M w_t(k)L_k(q_t)
+
a E^{\slip,2}_t.
\end{align*}
Next, recall the term expansion of the surrogate signal
\[
\Gamma_t^{\surr}
=
a\sum_{k=1}^M w_t(k) S_{k,t}
-
\frac{a}{M}\sum_j S_{j,t}
+
b\sum_{k=1}^M w_t(k)L_k(q_t).
\]
We note that the baseline penalty term ($- \frac{a}{M} \sum_{j=1}^M S_{j,t}$) and the liability term ($b \sum_{k=1}^M w_t(k) L_k(q_t)$) are identical in both expressions. By subtracting $\Gamma_t^{\text{surr}}$ from the right-hand side of the upper bound, these terms cancel perfectly, yielding
\[
\Gamma^{\hyb}_{\mix,t}
\le
\Gamma_t^{\surr}
+
a\sum_{k=1}^M (\pi_t(k;q_t)-w_t(k)) S_{k,t}
+
a E^{\slip,2}_t,
\]
Recognizing the first term on the right-hand side as $a E_t^{\text{slip},1}$ and rearranging proves the claim.
\end{proof}

\begin{theorem}[Tracking guarantee for the surrogate hybrid objective]
\label{thm:hybrid-surrogate}
Fix a horizon $T \ge 1$ and a switch budget $J \ge 0$.
Assume the weight sequence $(w_t)_{t=1}^T$ is generated by an online
learning algorithm that satisfies the following \emph{tracking-regret}
property:

for every loss array $\ell_{k,t}\in[0,U]$,
\[
\sum_{t=1}^T \sum_{k=1}^M w_t(k)\ell_{k,t}
-
\min_{j_{1:T}:S(j_{1:T})\le J}
\sum_{t=1}^T \ell_{j_t,t}
\le
U\sqrt{2T\,A_{J,M,T}},
\]
where
\[
A_{J,M,T}
:=
(J+1)\log M
+
J\log\!\Bigl(\frac{e(T-1)}{J}\Bigr),
\]
with the convention $0\log(c/0):=0$.

Suppose additionally that the hybrid expert signals are uniformly bounded:
\[
|\Gamma^{\hyb}_{k,t}| \le B
\qquad
\forall k\in[M],\ \forall t\in[T].
\]

Then the surrogate hybrid signal satisfies
\[
\sum_{t=1}^T \Gamma^{\surr}_t
-
\min_{j_{1:T}:S(j_{1:T})\le J}
\sum_{t=1}^T \Gamma^{\hyb}_{j_t,t}
\le
2B\sqrt{2T\,A_{J,M,T}}.
\]
\end{theorem}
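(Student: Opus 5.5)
The plan is a shift-and-rescale reduction: turn the hybrid signals into a loss array with values in $[0,U]$ and then apply the assumed tracking-regret guarantee. Define
\[
\ell_{k,t} := \Gamma^{\hyb}_{k,t} + B, \qquad k\in[M],\ t\in[T].
\]
The uniform bound $|\Gamma^{\hyb}_{k,t}|\le B$ gives $\ell_{k,t}\in[0,2B]$, so this is a legitimate loss array with range $U=2B$.

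Next, I will check that shifting by a constant leaves the regret unchanged. Since $w_t\in\Delta_M$, we have
\[
\sum_k w_t(k)\ell_{k,t}=\Gamma^{\surr}_t+B.
\]
Summing over $t$ gives $\sum_t\Gamma^{\surr}_t+TB$. For any comparator sequence $j_{1:T}$, the comparator loss is $\sum_t\ell_{j_t,t}=\sum_t\Gamma^{\hyb}_{j_t,t}+TB$. The same constant $TB$ is added to every feasible comparator, so the minimum over $\{j_{1:T}:S(j_{1:T})\le J\}$ also just shifts by $TB$. The additive $TB$ terms therefore cancel in the regret expression, and the left-hand side for $\ell$ equals exactly the surrogate regret in the statement.

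The final step is to invoke the tracking-regret property with $U=2B$, which yields the bound $2B\sqrt{2T\,A_{J,M,T}}$.

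One subtlety needs attention: the learner's weights $w_t$ were generated from the hybrid signals, not from the shifted losses $\ell$. The guarantee must hold for the sequence the learner actually sees, possibly chosen adaptively. I will argue that the algorithm is shift-invariant. For Fixed-Share or exponential weights, adding a constant to all experts' losses in a round does not change the normalized weights. So running the algorithm on $\Gamma^{\hyb}$ produces the same $w_t$ as running it on $\ell$, and the hypothesis applies. If shift-invariance is not assumed, I will simply take the hypothesis to mean that the algorithm is fed $\ell_{k,t}$, since $w_t$ is unaffected. This interpretive point is the only real obstacle; the rest is bookkeeping.
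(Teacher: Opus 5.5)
Your proposal is correct and follows the paper's proof essentially verbatim: shift each signal by $B$ to get losses in $[0,2B]$, apply the tracking-regret hypothesis with $U=2B$, and cancel the common $BT$ term, which passes through the minimum over comparators. Your remark on shift-invariance of exponential weights and Fixed-Share covers something the paper leaves implicit, since it applies the hypothesis to the shifted array with the same $w_t$ without comment; it is a useful clarification, not a needed fix.
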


\begin{proof}
We reduce the statement to the tracking-regret guarantee by shifting the hybrid signals into the nonnegative range. We define
\[
\ell_{k,t}:=\Gamma_{k,t}^{\hyb}+B.
\]
By assumption (i), we have
$$-B\le \Gamma_{k,t}^{\hyb}\le B
\quad \forall\,k,t$$ 
and it follows that 
$$0\le \ell_{k,t}\le 2B.$$ 
Therefore, assumption (ii) applies with \(U=2B\), yielding
\[
\sum_{t=1}^T\sum_{k=1}^M w_t(k)\ell_{k,t}
-
\min_{j_{1:T}:S(j_{1:T})\le J}
\sum_{t=1}^T \ell_{j_t,t}
\le
2B\sqrt{2T\,A_{J,M,T}}.
\tag{1}
\]
We now expand the two terms in \((1)\). First, since \(w_t\in\Delta^M\), we have \(\sum_{k=1}^M w_t(k)=1\) for every \(t\). Thus, it follows that
\[
\sum_{t=1}^T\sum_{k=1}^M w_t(k)\ell_{k,t}
=
\sum_{t=1}^T\sum_{k=1}^M w_t(k)\bigl(\Gamma_{k,t}^{\hyb}+B\bigr)\]
\[
=
\sum_{t=1}^T\sum_{k=1}^M w_t(k)\Gamma_{k,t}^{\hyb}
+
B\sum_{t=1}^T\sum_{k=1}^M w_t(k)
=
\sum_{t=1}^T \Gamma_t^{\surr}+BT.
\tag{2}
\]

Second, for any comparator sequence \(j_{1:T}\), we have
\[
\sum_{t=1}^T \ell_{j_t,t}
=
\sum_{t=1}^T \bigl(\Gamma_{j_t,t}^{\hyb}+B\bigr)
=
\sum_{t=1}^T \Gamma_{j_t,t}^{\hyb}+BT.
\]
Since the additive term \(BT\) does not depend on the comparator sequence, it
passes unchanged through the minimum:
\[
\min_{j_{1:T}:S(j_{1:T})\le J}
\sum_{t=1}^T \ell_{j_t,t}
=
\min_{j_{1:T}:S(j_{1:T})\le J}
\left(\sum_{t=1}^T \Gamma_{j_t,t}^{\hyb}+BT\right)
\]
\[
=
\min_{j_{1:T}:S(j_{1:T})\le J}
\sum_{t=1}^T \Gamma_{j_t,t}^{\hyb}
+
BT.
\tag{3}
\]

Substituting \((2)\) and \((3)\) into \((1)\), we obtain
\[
\left(\sum_{t=1}^T \Gamma_t^{\surr}+BT\right)
-
\left(
\min_{j_{1:T}:S(j_{1:T})\le J}
\sum_{t=1}^T \Gamma_{j_t,t}^{\hyb}
+
BT
\right)
\le
2B\sqrt{2T\,A_{J,M,T}}.
\]
The common shift \(BT\) cancels, leaving the desired bound
\[
\sum_{t=1}^T \Gamma_t^{\surr}
-
\min_{j_{1:T}:S(j_{1:T})\le J}
\sum_{t=1}^T \Gamma_{j_t,t}^{\hyb}
\le
2B\sqrt{2T\,A_{J,M,T}}.
\]
\end{proof}

To simplify notation for the actual (fee-adjusted) payment rule, we define the generic switch-corrected potential for any state $q \in \mathbb{R}^d$, weight vector $w \in \Delta_M$, and scalar $z \in \mathbb{R}$ as
\[\Phi(q, w, z) := C_{\text{mix}}(q; w) + z.\]
Let $\fee_s$ denote the weight-update fee incurred at round $s$. We define the cumulative adaptation fee up to round $t$ as
$$z_t := \sum_{s=1}^t \fee_s, \quad z_0 = 0.$$
Evaluating at the post-update market state and cumulative fee for round $t$ yields
$$\Phi(q_t, w_{t+1}, z_t) = C_{\text{mix}}(q_t; w_{t+1}) + z_t.$$
The realized payment rule $\text{Pay}_t^{\text{mix}}$ for the round naturally telescopes
$$\text{Pay}_t^{\text{mix}} = \Phi(q_t, w_{t+1}, z_t) - \Phi(q_{t-1}, w_t, z_{t-1}).$$
Expanding this difference yields exactly the standard cost increment plus the adaptation fee $$C_{\text{mix}}(q_t; w_{t+1}) - C_{\text{mix}}(q_{t-1}; w_t) + \fee_t.$$ 
Thus, $\Phi$ absorbs the accumulated friction, allowing us to express the realized market dynamics as a strict potential difference despite the dynamically changing weights.
\begin{theorem}[Transfer to the actual hybrid signal] 
\label{thm:transfer-clean}
Fix a horizon \(T\) and a comparator sequence \(j_{1:T}\in[M]^T\). 
For each round \(t\), let \(w_t,w_{t+1}\in\Delta^M\), and define
\[
\pi_t(k;q_t)
=
\frac{w_t(k)e^{\beta C_k(q_t)}}
{\sum_{m=1}^M w_t(m)e^{\beta C_m(q_t)}}.
\]
Using the surrogate signal, mixed signal, actual hybrid signal, and the error terms $E_t^{\text{slip},2}$ as defined in Theorem \ref{thm:hybrid-bridge-slippage}, we have:
\begin{align*}\sum_{t=1}^T \Gamma_t^{\Phi, \text{hyb}} - \sum_{t=1}^T \Gamma_{j_t,t}^{\text{hyb}} &\le R_T^{\text{surr}} + a \sum_{t=1}^T |(\pi_t - w_t) \cdot S_t| \\&\quad + a \sum_{t=1}^T \left[ \nabla_w C_{\text{mix}}(q_t; w_t) \cdot (w_{t+1} - w_t) \right] \\
&\quad + a \sum_{t=1}^T E_t^{\text{slip},2} + a \sum_{t=1}^T \fee_t,\end{align*}
where
$$R_T^{\text{surr}} := \sum_{t=1}^T \Gamma_t^{\text{surr}} - \sum_{t=1}^T \Gamma_{j_t, t}^{\text{hyb}}, \quad S_t := (S_{1,t}, \dots, S_{M,t}).$$
\end{theorem}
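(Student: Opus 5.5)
The bound follows by summing a per-round inequality over $t=1,\dots,T$. Each round's inequality is obtained by unpacking the definition of the realized signal and applying two earlier results: Theorem~\ref{thm:hybrid-bridge-slippage} for the mixed signal and Corollary~\ref{corr:costupdatedbound} for the weight-update term.

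Fix $t$. By definition,
\[
\Gamma^{\Phi,\hyb}_t=\Gamma^{\hyb}_{\mix,t}+a\bigl(C_{\mix}(q_t;w_{t+1})-C_{\mix}(q_t;w_t)\bigr)+a\,\fee_t .
\]
The three pieces are handled separately.

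\textbf{First piece.} Theorem~\ref{thm:hybrid-bridge-slippage} gives
\[
\Gamma^{\hyb}_{\mix,t}\le \Gamma^{\surr}_t+aE^{\slip,1}_t+aE^{\slip,2}_t .
\]
Here $E^{\slip,1}_t=(\pi_t(\cdot;q_t)-w_t)\cdot S_t$, so the trivial bound $x\le|x|$ with $a>0$ gives $aE^{\slip,1}_t\le a|(\pi_t-w_t)\cdot S_t|$.

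\textbf{Second piece.} Corollary~\ref{corr:costupdatedbound} is just the concavity inequality of Lemma~\ref{lem:cmix-concave-w} evaluated at $q=q_t$. It bounds the weight-update term by $a\,\nabla_w C_{\mix}(q_t;w_t)\cdot(w_{t+1}-w_t)$, without the positive part, exactly as it appears in the statement.

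\textbf{Third piece.} The fee term $a\,\fee_t$ is carried along unchanged.

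Adding the three bounds gives, for each $t$,
\[
\Gamma^{\Phi,\hyb}_t\le \Gamma^{\surr}_t+a|(\pi_t-w_t)\cdot S_t|+a\,\nabla_w C_{\mix}(q_t;w_t)\cdot(w_{t+1}-w_t)+aE^{\slip,2}_t+a\,\fee_t .
\]
Next I sum over $t=1,\dots,T$ and subtract $\sum_t\Gamma^{\hyb}_{j_t,t}$ from both sides. The first term on the right becomes $\sum_t\Gamma^{\surr}_t-\sum_t\Gamma^{\hyb}_{j_t,t}$, which is by definition $R_T^{\surr}$ for the fixed comparator $j_{1:T}$. The remaining sums are exactly the error terms in the statement.

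The main obstacle is checking that the hypotheses of Lemma~\ref{lem:cmix-concave-w} hold at every round. That lemma needs $w_t,w_{t+1}$ in a set on which $Z(q,w)=w^\top a(q)>0$ and $\log Z$ is concave. This holds on all of $\Delta_M$, since $a(q)>0$ componentwise and the concavity argument did not actually use $\delta$. I would point this out, or else invoke the full-support assumption, so that the corollary applies for arbitrary $w_t,w_{t+1}\in\Delta^M$. I would also state explicitly that $\pi_t$ in the mismatch term means $\pi_t(\cdot;q_t)$, matching $E^{\slip,1}_t$, and that $a>0$ is used when multiplying the inequalities.
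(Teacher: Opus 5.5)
Your proof is correct and follows the paper's argument. You decompose $\Gamma^{\Phi,\hyb}_t$, bound the weight-update term via Corollary~\ref{corr:costupdatedbound}, apply Theorem~\ref{thm:hybrid-bridge-slippage} with an absolute value on the mismatch term, then sum and subtract the comparator. Your version is slightly cleaner than the paper's in two respects. First, you use the corollary's inequality without the positive part, which gives exactly the stated bound; the paper's proof passes through $[\,\cdot\,]_+$ and so formally proves only the weaker form used later in Theorem~\ref{thm:actual-hybrid-regret}. Second, your observation that the $\delta$-interiority condition in Lemma~\ref{lem:cmix-concave-w} is unnecessary on $\Delta_M$ is right.
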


\begin{proof} By the definition of the actual payment rule with the adaptation fee, we have:$$\text{Pay}_t^{\text{mix}}(r_t) = C_{\text{mix}}(q_t; w_{t+1}) - C_{\text{mix}}(q_{t-1}; w_t) + \fee_t.$$
Because the fee $\fee_t$ strictly modifies the payment (and thus the realized slippage), it inherits the slippage scaling factor $a$. Accordingly, the actual hybrid signal decomposes as:$$\Gamma_t^{\Phi, \text{hyb}} = \Gamma_{\text{mix},t}^{\text{hyb}} + a(C_{\text{mix}}(q_t; w_{t+1}) - C_{\text{mix}}(q_t; w_t)) + a \fee_t.$$
By the weight-update bound (cf. Corollary \ref{corr:costupdatedbound}),$$C_{\text{mix}}(q_t; w_{t+1}) - C_{\text{mix}}(q_t; w_t) \le \left[ \nabla_w C_{\text{mix}}(q_t; w_t) \cdot (w_{t+1} - w_t) \right]_+,$$and hence$$\Gamma_t^{\Phi, \text{hyb}} \le \Gamma_{\text{mix},t}^{\text{hyb}} + a \left[ \nabla_w C_{\text{mix}}(q_t; w_t) \cdot (w_{t+1} - w_t) \right]_+ + a \fee_t.$$
Next, we apply the mixed-to-surrogate bridge inequality (Theorem \ref{thm:hybrid-bridge-slippage}) to the $\Gamma_{\text{mix},t}^{\text{hyb}}$ term:$$\Gamma_{\text{mix},t}^{\text{hyb}} \le \Gamma_t^{\text{surr}} + a (\pi_t - w_t) \cdot S_t + a E_t^{\text{slip},2}.$$
Taking absolute values on the mismatch term preserves the inequality:
$$\Gamma_{\text{mix},t}^{\text{hyb}} \le \Gamma_t^{\text{surr}} + a |(\pi_t - w_t) \cdot S_t| + a E_t^{\text{slip},2}.$$
Substituting this into the previous display gives:
\begin{align*}\Gamma_t^{\Phi, \text{hyb}} &\le \Gamma_t^{\text{surr}} + a |(\pi_t - w_t) \cdot S_t| + a E_t^{\text{slip},2} + a \left[ \nabla_w C_{\text{mix}}(q_t; w_t) \cdot (w_{t+1} - w_t) \right]_+ + a \fee_t.
\end{align*}
Summing over $t = 1, \dots, T$ and subtracting the comparator sequence $\sum_{t=1}^T \Gamma_{j_t,t}^{\text{hyb}}$ from both sides yields the claim.

\end{proof}

\begin{theorem}[Tracking guarantee for the realized adaptive market]
\label{thm:actual-hybrid-regret}
Fix a horizon $T$ and a comparator sequence
$j_{1:T}$ satisfying $S(j_{1:T}) \le J$.

Assume the following.

\begin{enumerate}
\item \textbf{Bounded hybrid signals.}
For all $k\in[M]$ and $t\in[T]$, $|\Gamma^{\hyb}_{k,t}| \le B.$
\item \textbf{Tracking guarantee for the surrogate objective.}
The weight sequence $(w_t)_{t=1}^T$ is generated by an online learning
algorithm satisfying the tracking-regret guarantee of
Theorem~\ref{thm:hybrid-surrogate}.

\item \textbf{Controlled updates and fees.}
There exist constants $C_w,C_f\ge0$ such that
\[
a\sum_{t=1}^T
\bigl[
\nabla_w C_{\mix}(q_t;w_t)\cdot(w_{t+1}-w_t)
\bigr]_+
\le
C_w,
\]
and
\[
a\sum_{t=1}^T \fee_t \le C_f.
\]
\end{enumerate}

Then the realized adaptive market satisfies
\[
\sum_{t=1}^T \Gamma_t^{\Phi,\hyb}
-
\sum_{t=1}^T \Gamma^{\hyb}_{j_t,t}
\le
2B\sqrt{2T\,A_{J,M,T}}
+
a\sum_{t=1}^T |(\pi_t-w_t)\cdot S_t|
+
a\sum_{t=1}^T E_t^{\slip,2}
+
C_w
+
C_f.
\]
\end{theorem}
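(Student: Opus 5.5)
The plan is to chain the per-round decomposition behind Theorem~\ref{thm:transfer-clean} with the surrogate tracking bound of Theorem~\ref{thm:hybrid-surrogate}. The comparator $j_{1:T}$ is fixed with $S(j_{1:T})\le J$, so $R_T^{\surr}$ is exactly the quantity appearing in Theorem~\ref{thm:transfer-clean} for this comparator.

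\textbf{Step 1: per-round bound.} For each $t$, I will work with the realized signal written as
\[
\Gamma^{\Phi,\hyb}_t=\Gamma^{\hyb}_{\mix,t}+a\bigl(C_{\mix}(q_t;w_{t+1})-C_{\mix}(q_t;w_t)\bigr)+a\,\fee_t .
\]
I will bound the cost-update term by $a\bigl[\nabla_w C_{\mix}(q_t;w_t)\cdot(w_{t+1}-w_t)\bigr]_+$ using Corollary~\ref{corr:costupdatedbound}. I will work with this positive-part version as in the proof of Theorem~\ref{thm:transfer-clean}, rather than its display without $[\cdot]_+$, since assumption~3 controls exactly the positive part. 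Then I will apply Theorem~\ref{thm:hybrid-bridge-slippage} and replace $E^{\slip,1}_t$ by $|(\pi_t-w_t)\cdot S_t|$.

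\textbf{Step 2: sum and substitute.} Summing over $t$ and subtracting $\sum_t\Gamma^{\hyb}_{j_t,t}$ gives
\[
\sum_t\Gamma^{\Phi,\hyb}_t-\sum_t\Gamma^{\hyb}_{j_t,t}\le R_T^{\surr}+a\sum_t|(\pi_t-w_t)\cdot S_t|+a\sum_t E^{\slip,2}_t+a\sum_t[\cdot]_++a\sum_t\fee_t .
\]
Assumption~3 bounds the last two sums by $C_w$ and $C_f$.

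\textbf{Step 3: bound $R_T^{\surr}$.} Because the comparator is feasible, $\sum_t\Gamma^{\hyb}_{j_t,t}\ge\min_{j'_{1:T}:S(j'_{1:T})\le J}\sum_t\Gamma^{\hyb}_{j'_t,t}$. Hence $R_T^{\surr}$ for $j_{1:T}$ is at most the minimum-based regret. Theorem~\ref{thm:hybrid-surrogate}, applicable under assumptions~1--2, bounds that regret by $2B\sqrt{2T\,A_{J,M,T}}$.

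\textbf{Main obstacle.} The only delicate point is the sign convention of the weight-update term: Theorem~\ref{thm:transfer-clean}'s statement has no positive part, while assumption~3 bounds the positive part. I will avoid this mismatch by rederiving the one-line per-round inequality with $[\cdot]_+$ directly from Corollary~\ref{corr:costupdatedbound}, rather than citing the transfer theorem verbatim. Everything else is bookkeeping.
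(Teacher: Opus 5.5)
Your proposal is correct and follows essentially the same route as the paper: it combines the transfer bound of Theorem~\ref{thm:transfer-clean} (whose proof, like your Step~1, uses the positive-part form of Corollary~\ref{corr:costupdatedbound}) with the surrogate bound of Theorem~\ref{thm:hybrid-surrogate}, then absorbs the update and fee sums into $C_w$ and $C_f$. Your comparator-feasibility step in Step~3 is a detail the paper skips, and the sign-convention issue you flag is harmless: the unclipped sum in the transfer statement is the tighter bound, and $x\le[x]_+$ turns it into the form controlled by assumption~3.
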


\begin{proof}
We apply the exact transfer bound in Theorem~\ref{thm:transfer-clean}, which decomposes the realized hybrid regret as:
\begin{align*}\sum_{t=1}^T \Gamma_t^{\Phi, \text{hyb}} - \sum_{t=1}^T \Gamma_{j_t,t}^{\text{hyb}} &\le R_T^{\text{surr}} + a \sum_{t=1}^T |(\pi_t - w_t) \cdot S_t| + a \sum_{t=1}^T E_t^{\text{slip},2} \\ & + a \sum_{t=1}^T \left[ \nabla_w C_{\text{mix}}(q_t; w_t) \cdot (w_{t+1} - w_t) \right]_+ + a \sum_{t=1}^T \fee_t.\end{align*}
By Assumption (ii), the surrogate tracking regret is bounded by Theorem~\ref{thm:hybrid-surrogate} as $R_T^{\text{surr}} \le 2B \sqrt{2T A_{J,M,T}}$. By Assumption (iii), the cumulative weight-update costs and adaptation fees are bounded by $C_w$ and $C_f$, respectively. Substituting these three bounds directly into the transfer inequality yields the claim.
\end{proof}

\begin{lemma}[Dominance implies small mismatch]
\label{≈}
For any \(u,v\in\Delta^M\) and any \(j\in[M]\),
\[
\|u-v\|_1
\le
2(1-u(j)) + 2(1-v(j)).
\]
\end{lemma}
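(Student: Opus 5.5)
The plan is to split the $\ell_1$ distance into the contribution of coordinate $j$ and the contribution of all other coordinates, bound each by the missing mass at $j$, and use only the simplex constraints $u,v\ge 0$ and $\sum_k u(k)=\sum_k v(k)=1$. We write
\[
\|u-v\|_1 = |u(j)-v(j)| + \sum_{k\neq j}|u(k)-v(k)|.
\]

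For the first term, we write $u(j)-v(j) = (1-v(j)) - (1-u(j))$. Both quantities $1-u(j)$ and $1-v(j)$ are nonnegative because $u(j),v(j)\le 1$. The triangle inequality then gives
\[
|u(j)-v(j)| \le (1-u(j)) + (1-v(j)).
\]

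For the off-$j$ coordinates, we apply the triangle inequality coordinatewise, $|u(k)-v(k)|\le u(k)+v(k)$, which is valid since both entries are nonnegative. Summing over $k\ne j$ and using the normalization $\sum_{k\neq j}u(k) = 1-u(j)$ (and likewise for $v$) yields
\[
\sum_{k\neq j}|u(k)-v(k)| \le (1-u(j)) + (1-v(j)).
\]

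Adding the two bounds gives exactly $2(1-u(j))+2(1-v(j))$. The argument is elementary; the only step requiring care is rewriting the tail sums through the simplex constraint. The bound is in fact loose by a factor of roughly two, since $\|u-v\|_1\le 2\max(1-u(j),1-v(j))$ would also hold, but the stated form is what the later dominance corollary uses, so I would not tighten it.
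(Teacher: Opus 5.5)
Your proof is correct and follows the paper's argument step for step. Like the paper, you split off coordinate $j$, bound $|u(j)-v(j)|$ by $(1-u(j))+(1-v(j))$, and bound the remaining coordinates by $\sum_{k\neq j}(u(k)+v(k))$ using the simplex constraint. Your side remark is also right: the exact quantity $|u(j)-v(j)|+(1-u(j))+(1-v(j))$ already equals $2\max(1-u(j),1-v(j))$, so the stated bound is loose by at most a factor of two.
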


\begin{proof}
We decompose the $L_1$ distance into the distinguished coordinate $j$ and the remaining coordinates:
\[\|u-v\|_1
=|u(j)-v(j)|+
\sum_{k\neq j}|u(k)-v(k)|.\]
For the distinguished coordinate, we apply the triangle inequality through $1$. Since $u(j), v(j) \le 1$, we have
\[
|u(j)-v(j)|
\le
(1-u(j))+(1-v(j)),
\]
For the remaining coordinates, because all elements of the simplex are non-negative, the absolute difference is strictly bounded by the sum:
\[
\sum_{k\neq j}|u(k)-v(k)|
\le
\sum_{k\neq j}(u(k)+v(k))
=
(1-u(j))+(1-v(j)).
\]
Because $u,v\in\Delta^M$, their components sum to $1$. Thus, $\sum_{k \neq j} u(k) = 1 - u(j)$ and $\sum_{k \neq j} v(k) = 1 - v(j)$, yielding:
\[
\|u-v\|_1
\le
2(1-u(j)) + 2(1-v(j)).
\]
Adding the bounds for the distinguished coordinate and the remaining components proves the claim.
\end{proof}


\begin{corollary}[Dominant-regime regret] 
\label{cor:dominant-clean}

Under the assumptions of Theorem~\ref{thm:actual-hybrid-regret}, suppose additionally:

\begin{itemize}
\item[(i)] \textbf{Bounded trades and gradients}: For all $k$ and $t$
$$\|r_t\| \le R, \quad \|\nabla C_k(q)\|_* \le G$$

\item[(ii)] \textbf{Bounded drift and scaling penalties}: The total intra-trade drift is bounded by 
$$a \sum_{t=1}^T E_t^{\text{slip},2} \le C_{\text{drift}}.$$
Furthermore, the adaptation penalties scale with the number of switches, such that 
$$C_w + C_f = \mathcal{O}(|B|).$$ 

\item[(iii)] \textbf{(Dominance outside burn-in)}
There exists \(B \subseteq [T]\) with \(|B|\le (J+1)B_0\) such that for \(t \notin B\),
\[
w_t(j_t) \ge 1 - \varepsilon^-_t,
\qquad
\pi_t(j_t;q_t) \ge 1 - \varepsilon^\pi_t,
\]
and the sum of these errors outside the burn-in period is bounded
\[
\sum_{t\notin B} (\varepsilon^-_t + \varepsilon^\pi_t) \le C_{\dom}.
\]
\end{itemize}

Then the realized hybrid regret satisfies
\[
\sum_{t=1}^T \Gamma^{\Phi,\hyb}_t
-
\sum_{t=1}^T \Gamma^{\hyb}_{j_t,t}
\;\le\;
2B\sqrt{2T A_{J,M,T}}
+
4aGR\, C_{\dom}
+
\mathcal{O}(|B|).\]

In particular, if $J$, $|B|$, $C_{\text{dom}}$, and $C_{\text{drift}}$ are fixed with respect to $T$, the average regret vanishes
\[\frac{1}{T}
\Big(
\sum_{t=1}^T \Gamma^{\Phi,\hyb}_t
-
\sum_{t=1}^T \Gamma^{\hyb}_{j_t,t}
\Big)
\to 0.\]
\end{corollary}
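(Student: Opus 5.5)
The plan is to start from Theorem~\ref{thm:actual-hybrid-regret}, which already gives
\[
\sum_{t=1}^T \Gamma^{\Phi,\hyb}_t-\sum_{t=1}^T \Gamma^{\hyb}_{j_t,t}
\le 2B\sqrt{2T A_{J,M,T}} + a\sum_{t=1}^T |(\pi_t-w_t)\cdot S_t| + a\sum_{t=1}^T E^{\slip,2}_t + C_w + C_f .
\]
The drift term is at most $C_{\drift}$ by hypothesis (ii), and $C_w+C_f=\mathcal O(|B|)$ by the same hypothesis. The only real work is therefore bounding the mismatch sum $a\sum_t|(\pi_t-w_t)\cdot S_t|$. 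We split this sum over $t\in B$ and $t\notin B$.

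First we need a uniform bound on slippage. By the mean value theorem along the segment from $q_{t-1}$ to $q_t$, $C_k(q_t)-C_k(q_{t-1})=\nabla C_k(\xi)\cdot r_t$ for some $\xi$ on that segment. Hence
\[
|S_{k,t}|\le |\nabla C_k(\xi)\cdot r_t|+|\nabla C_k(q_{t-1})\cdot r_t|\le 2GR
\]
by hypothesis (i) and the dual-norm inequality. Hölder's inequality then gives
\[
|(\pi_t-w_t)\cdot S_t|\le \|\pi_t-w_t\|_1\,\max_k|S_{k,t}|\le 2GR\,\|\pi_t-w_t\|_1 .
\]

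For $t\in B$ we use the crude bound $\|\pi_t-w_t\|_1\le 2$. Each such round then contributes at most $4aGR$, so the burn-in rounds contribute $\mathcal O(|B|)$ in total.

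For $t\notin B$ we apply the preceding dominance lemma with $u=\pi_t(\cdot;q_t)$, $v=w_t$ and $j=j_t$. This gives $\|\pi_t-w_t\|_1\le 2\varepsilon^\pi_t+2\varepsilon^-_t$. Each such round therefore contributes at most $4aGR(\varepsilon^\pi_t+\varepsilon^-_t)$. Summing over $t\notin B$ and using hypothesis (iii) yields a total of at most $4aGR\,C_{\dom}$.

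Combining the pieces gives the displayed bound. The constant $C_{\drift}$ is absorbed into the $\mathcal O(\cdot)$ term: it is assumed fixed in $T$, as the ``in particular'' clause makes explicit. For the average-regret claim, we divide by $T$. Since $A_{J,M,T}=\mathcal O(\log T)$ for fixed $J$ and $M$, the first term is $\mathcal O(\sqrt{T\log T})$. Every other term is constant in $T$, so the average regret tends to $0$.

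The main obstacle is the uniform slippage bound $|S_{k,t}|\le 2GR$. It requires the gradient bound $G$ to hold at intermediate points of the segment, which hypothesis (i) provides because it holds for all $q$. A secondary concern is bookkeeping: the constants from burn-in, drift, and penalties must be collected into $\mathcal O(|B|)$ without double counting.
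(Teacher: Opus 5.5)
Your proposal is correct and follows the paper's proof essentially step for step. The shared steps are the uniform bound $\|S_t\|_\infty\le 2GR$ via the mean value theorem and the dual-norm inequality, H\"older's inequality, the dominance lemma at coordinate $j_t$ outside $B$, the bound $\|\pi_t-w_t\|_1\le 2$ on burn-in rounds, and substitution into Theorem~\ref{thm:actual-hybrid-regret}. Your explicit absorption of $C_{\text{drift}}$ into the $\mathcal{O}(\cdot)$ term is no looser than the paper, which drops that constant from the final display without comment.
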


\begin{proof}
Under the bounded-gradient assumption (i), we obtain a uniform bound on expert slippage. By definition
\[S_{k,t}=D_{C_k}(q_t,q_{t-1})
=C_k(q_t)-C_k(q_{t-1})-\nabla C_k(q_{t-1})\cdot r_t.\]
Applying the Mean Value Theorem, followed by Hölder's and the triangle inequality, gives
$$S_{k,t} = C_k(q_t) - C_k(q_{t-1}) - \nabla C_k(q_{t-1}) \cdot r_t \le 2 \sup_{q} \|\nabla C_k(q)\|_* \|r_t\| \le 2GR.$$
Thus, the vector of expert slippages is bounded by $\|S_t\|_\infty \le 2GR$. By Hölder's inequality, the mismatch term satisfies
$$|(\pi_t - w_t) \cdot S_t| \le \|S_t\|_\infty \|\pi_t - w_t\|_1 \le 2GR \|\pi_t - w_t\|_1.$$
By Lemma~\ref{≈}, evaluating the $L_1$ distance at the optimal coordinate $j_t$ gives
$$\|\pi_t - w_t\|_1 \le 2(1 - \pi_t(j_t; q_t)) + 2(1 - w_t(j_t)).$$
Summing over $t = 1, \dots, T$ and separating the burn-in and non-burn-in rounds yields
$$\sum_{t=1}^T \|\pi_t - w_t\|_1 = \sum_{t \notin B} \|\pi_t - w_t\|_1 + \sum_{t \in B} \|\pi_t - w_t\|_1 \le 2C_{\text{dom}} + 2|B|,$$
where we used Assumption (iii) for $t \notin B$, and the  bound $\|\pi_t - w_t\|_1 \le 2$ for $t \in B$. Multiplying by $a\|S_t\|_\infty$, the cumulative mismatch penalty is bounded by
$$a \sum_{t=1}^T |(\pi_t - w_t) \cdot S_t| \le 4aGR C_{\text{dom}} + 4aGR|B|.$$
Finally, we substitute this mismatch bound into Theorem~\ref{thm:actual-hybrid-regret}. By Assumption (ii), the drift term is bounded by $C_{\text{drift}}$, and the update penalties $C_w + C_f$ scale as $\mathcal{O}(|B|)$. Since $4aGR|B|$ is also $\mathcal{O}(|B|)$, these terms combine, yielding the result.
\end{proof}


\section{Positive Directional Upside for LMSR Mixtures}
\label{app:positive-upside-general}

This section proves Theorem~\ref{thm:positive-upside} and verifies that
its assumptions hold for LMSR mixtures under stable weight updates. We
first establish the result for \(N\) outcomes and \(M\) liquidity
levels, and then specialize to the two-liquidity case used in the
simulations.

Let the expert cost functions be
\[
C_k(q)
=
b_k \log\!\Big(\sum_{i=1}^N e^{q_i/b_k}\Big),
\qquad
k=1,\dots,M,
\]
and define the mixed cost
\[
C^{\mix}(q;w)
=
\frac1\beta
\log\!\Big(
\sum_{k=1}^M
w(k)e^{\beta C_k(q)}
\Big).
\]
Let $p_t
:=
\nabla C^{\mix}(q_{t-1};w_t)$,
and consider a directional trade
$r=s v$.
The adaptive payment rule is
\[
\Pay_t^{\mix}(r)
=
C^{\mix}(q_{t-1}+r;w_t)
-
C^{\mix}(q_{t-1};w_t)
+
\Delta_t^{\dyn}(r)
+
\fee_t.
\]

For the \(N\)-outcome LMSR, the Hessian satisfies the uniform bound
\[
\nabla^2 C_k(q)
\preceq
\frac{1}{2b_k}I.
\]
Hence, by
Theorem~\ref{thm:smoothness-logsum-master},
\[
\nabla^2 C^{\mix}(q;w_t)
\preceq
\Big(
\frac{1}{2b_{\min}}
+
\beta G^2
\Big)I,
\]
where
\[
b_{\min}:=\min_k b_k,
\]
and \(G\) is a uniform bound on the expert gradients over the bounded
reachable region.

Therefore,
\[
C^{\mix}(q_{t-1}+r;w_t)
\le
C^{\mix}(q_{t-1};w_t)
+
p_t\cdot r
+
\frac{L_t}{2}\|r\|^2,
\]
with
\[
L_t
\le
\frac{1}{2b_{\min}}
+
\beta G^2.
\]

Weights are updated via fixed-share using the hybrid signal
\[
\Gamma^{\hyb}_{k,t}
=
a\Big(
S_{k,t}
-
\frac1M\sum_{j=1}^M S_{j,t}
\Big)
+
b\,L_k(q_t),
\qquad
a,b>0,
\]
where
$S_{k,t}
=
D_{C_k}(q_t,q_{t-1})$.

Assume trades remain in a bounded reachable region. Then both slippage
and liability are bounded, and hence the hybrid signal is uniformly
bounded. Under fixed-share updates with learning rate \(\eta\)
and share parameter \(\alpha\),
\[
\|w_{t+1}-w_t\|
=
O(\eta)+O(\alpha).
\]

Assume additionally that the weights remain uniformly interior:
\[
w_t(k)\ge\delta>0,
\qquad
\forall k,t.
\]
By Lemma~\ref{lem:cmix-concave-w},
\(C^{\mix}(q;w)\) is uniformly Lipschitz in \(w\), with Lipschitz
constant independent of \(q\). Consequently, both the dynamic distortion
and the switching fee satisfy
\[
|\Delta_t^{\dyn}(sv)|
+
\fee_t
\le
F_{0,t}
+
F_{1,t}\|v\| s
+
F_{2,t}\|v\|^2 s^2,
\]
where
\[
F_{0,t}=O(\eta+\alpha),
\qquad
F_{1,t},F_{2,t}=O(\eta+\alpha).
\]

\begin{corollary}[Positive directional upside for LMSR mixtures]
\label{cor:lmsr-positive-upside-general}
Fix a round \(t\), and suppose trades remain in a bounded reachable
region. Assume there exists \(\epsilon>0\) such that
\(
(\rho(o)-p_t)\cdot v \ge \epsilon\|v\|.
\)
If the learning parameters satisfy
\(
\eta+\alpha \le c\,\epsilon^2
\)
for a sufficiently small constant \(c>0\), then there exists a
nontrivial interval \((s_{\min},s_{\max})\) such that every trade
\(sv\) with \(s\in(s_{\min},s_{\max})\) yields strictly positive
realized profit.
\end{corollary}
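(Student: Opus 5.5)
The plan is to instantiate the quadratic lower bound from the proof of Theorem~\ref{thm:positive-upside} with the LMSR-specific constants derived just above. Then check the two conditions $A>0$ and $A^2>4BF_{0,t}$ under $\eta+\alpha\le c\epsilon^2$.

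\textbf{Step 1: the profit bound.} The payment decomposition into the frozen increment, $\Delta_t^{\dyn}$ and $\fee_t$ gives
\[
\rho(o)\cdot(sv)-\Pay_t^{\mix}(sv)\ge -F_{0,t}+sA-s^2B.
\]
Here $A=(\rho(o)-p_t)\cdot v-F_{1,t}\|v\|$ and $B=(L_t/2+F_{2,t})\|v\|^2$. The curvature constant satisfies $L_t\le \frac{1}{2b_{\min}}+\beta G^2=:L$, by Theorem~\ref{thm:smoothness-logsum-master} and the LMSR Hessian bound. On the bounded reachable region, together with the interior-weight condition $w_t(k)\ge\delta$ and fixed-share stability, the distortion coefficients satisfy $F_{i,t}\le K(\eta+\alpha)$ for a constant $K$ depending only on $\delta,\beta,G$, the region and the signal bound.

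\textbf{Step 2: checking the conditions.} The margin assumption gives
\[
A\ge(\epsilon-K(\eta+\alpha))\|v\|\ge(\epsilon-Kc\epsilon^2)\|v\|.
\]
For $c$ small, and in the regime $\epsilon\le 1$, this is at least $\tfrac{\epsilon}{2}\|v\|>0$. (We may assume $\epsilon$ is bounded, since $|(\rho(o)-p_t)\cdot v|\le \mathrm{diam}(H)\|v\|$.) For the discriminant, note $B\le(L/2+Kc\epsilon^2)\|v\|^2\le L'\|v\|^2$ with $L'$ a constant. Hence
\[
4BF_{0,t}\le 4L'K c\epsilon^2\|v\|^2,
\qquad
A^2\ge \tfrac{\epsilon^2}{4}\|v\|^2.
\]
So it suffices to take $c<1/(16L'K)$, together with $Kc\le 1/2$. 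This is exactly why the learning-rate condition is quadratic in $\epsilon$: $F_{0,t}$ enters linearly, but it is compared against $A^2\sim\epsilon^2$.

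\textbf{Step 3: conclusion.} The two roots are
\[
s_{\min,\max}=\frac{A\mp\sqrt{A^2-4BF_{0,t}}}{2B}.
\]
Both are positive, since $F_{0,t}\ge0$; if $F_{0,t}=0$, take $s_{\min}=0$. The quadratic is strictly positive between them, which gives the interval $(s_{\min},s_{\max})$.

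\textbf{Main obstacle.} The delicate point is the uniformity of the constants $K$ and $L'$. The Lipschitz-in-$w$ bound from Lemma~\ref{lem:cmix-concave-w} must hold uniformly on the reachable region, so that $|\Delta_t^{\dyn}(sv)|$ is controlled for all $s$ in the interval considered. The interval must also remain inside the region where $G$ bounds the gradients. I would handle this by first fixing the bounded region, then restricting $s\le s_{\max}\le A/B$, which is itself bounded independently of $\eta,\alpha$. Finally I would verify that $q_{t-1}+sv$ stays in a slightly enlarged compact set on which all constants are taken.
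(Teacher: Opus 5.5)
Your proposal is correct and follows the same route as the paper: it plugs $F_{i,t}=O(\eta+\alpha)$ into the quadratic lower bound $-F_{0,t}+As-Bs^2$ from the proof of Theorem~\ref{thm:positive-upside}, then checks $A>0$ and $A^2>4BF_{0,t}$ under $\eta+\alpha\le c\,\epsilon^2$. Your explicit constants ($A\ge \tfrac{\epsilon}{2}\|v\|$, $4BF_{0,t}\le 4L'Kc\,\epsilon^2\|v\|^2$, hence $c<1/(16L'K)$) spell out the step the paper only asserts in one line.
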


\begin{proof}
From the proof of Theorem~\ref{thm:positive-upside}, the realized profit
satisfies
\[
\rho(o)\cdot(sv)-\Pay_t^{\mix}(sv)
\ge
-
F_{0,t}
+
s(\rho(o)-p_t)\cdot v
-
F_{1,t}\|v\|s
-
\Bigl(
\frac{L_t}{2}+F_{2,t}
\Bigr)\|v\|^2 s^2.
\]
Define
\(
A_t := (\rho(o)-p_t)\cdot v - F_{1,t}\|v\|
\)
and
\(
Q_t := \bigl(\frac{L_t}{2}+F_{2,t}\bigr)\|v\|^2.
\)
Then
\[
\rho(o)\cdot(sv)-\Pay_t^{\mix}(sv)
\ge
-
F_{0,t}
+
A_t s
-
Q_t s^2.
\]

Since
\(
F_{0,t},F_{1,t},F_{2,t}=O(\eta+\alpha),
\)
choosing
\(
\eta+\alpha \le c\,\epsilon^2
\)
for sufficiently small \(c>0\) ensures that
\(
A_t>0
\)
and
\(
A_t^2>4Q_tF_{0,t}.
\)
Hence the quadratic
\(
-F_{0,t}+A_t s-Q_t s^2
\)
is strictly positive on a nontrivial interval
\(
s\in(s_{\min},s_{\max}),
\)
which yields positive directional upside in the sense of
Definition~\ref{def:positive-upside}.
\end{proof}

We now specialize to the two-liquidity case used in the simulations,
with
\[
b_1=1,
\qquad
b_2=12,
\qquad
b_{\min}=1.
\]
In this setting,
\[
L_t
\le
\frac12+\beta G^2.
\]
Thus, the curvature is uniformly bounded.

Choosing parameters such as
\[
\beta=1,
\qquad
\eta\in[10^{-4},10^{-3}],
\qquad
\alpha\le10^{-3},
\]
ensures that
\[
F_{0,t},F_{1,t},F_{2,t}
=
O(\eta+\alpha)
\]
remain small relative to typical payoff margins observed in the
simulations.

Consequently, whenever a direction \(v\) satisfies a payoff margin
condition of the form
\[
(\rho(o)-p_t)\cdot v
\ge
\epsilon\|v\|
\]
for some margin \(\epsilon>0\), the conditions
\[
A_t>0,
\qquad
A_t^2>4Q_tF_{0,t},
\]
hold for sufficiently stable updates, yielding a nontrivial interval of
strictly profitable trades.

Therefore, under stable adaptive updates and uniformly interior weights,
LMSR mixtures with arbitrary numbers of outcomes and liquidity levels
admit a nontrivial range of profitable trade sizes whenever the realized
payoff advantage dominates the update-induced distortions.

\section{Experiment Details}\label{app:code}

All experiments were implemented in \texttt{Google Colab}, which provided a convenient environment for rapid iteration and visualization. The simulations use the adaptive mixture market described in the main text with two LMSR liquidity experts, corresponding to a low-liquidity market with $b_1=1$ and a high-liquidity market with $b_2=12$. The mixture temperature is fixed at $\beta=1$ throughout.

To reduce sensitivity to the choice of hybrid-signal coefficients, we extend the base algorithm by introducing a collection of meta-experts. Each meta-expert corresponds to a pair consisting of a liquidity expert and a coefficient profile $m = (a_m,b_m)$ for the hybrid signal. The set of coefficient profiles $\mathcal M$ is defined as
\[\mathcal M = 
\texttt{coeff\_profiles} = \left\{ (6.0, 0.2), \; (3.0, 0.7), \; (1.5, 1.5), \; (0.7, 3.0), \; (0.2, 6.0) \right\}
\]
These profiles range from slippage-dominated objectives to liability-dominated objectives, with an intermediate balanced profile. Thus, the learner is not forced to commit in advance to a single relative weighting of execution cost and inventory risk. Instead, it learns over the product class of liquidity regimes and coefficient profiles.

The learner uses Fixed-Share over this meta-expert class. For each round, the loss assigned to a meta-expert with liquidity index $k$ and coefficient profile $(a_m,b_m)$ is
\[
\ell_{k,m,t}
=
a_m \frac{S_{k,t}-\bar{S}_t}{\sigma_{\text{slip}}}
+
b_m \frac{L_k(q_t)}{\sigma_{\text{liab}}}.
\]
where $\bar{S}_t := \frac{1}{M}\sum_{j=1}^M S_{j,t}$ is the mean slippage across experts. This mean-centering is a standard translation-invariant shift for exponential weights that improves numerical stability without altering the relative theoretical updates.

In the experiments, the normalization constants are set to $\texttt{SLIP\_SCALE}= \sigma_{\text{slip}}=1$ and $ \texttt{LIAB\_SCALE}= \sigma_{\text{liab}}= 4$. 
The scaling constants $\sigma_{\text{slip}}$ and $\sigma_{\text{liab}}$ are used in the experiments to normalize the magnitudes of the two signal components for stable learning, and are omitted from the main text since they do not affect the theoretical formulation and can be absorbed into the coefficients $(a,b)$.
Fixed-Share performs a multiplicative-weights update using learning rate $\eta=5\times 10^{-4}$ and subsequently mixes the updated distribution with the uniform distribution using share parameter $\alpha=10^{-4}$. The share step prevents weights from collapsing completely and allows the learner to recover when the best liquidity regime or coefficient profile changes.

The market itself uses the marginal weights over liquidity experts induced by the meta-expert distribution. That is, after updating the meta-expert weights, we sum the weights of all meta-experts associated with each liquidity level to obtain the liquidity weights used in the mixed cost. The effective liquidity shown in the first panel is computed from the posterior pricing weights rather than directly from the learning weights
\[
b_{\mathrm{eff},t}
=
\left(
\sum_k \frac{\pi_t(k;q_t)}{b_k}
\right)^{-1}.
\]
This provides a curvature-matched liquidity scale for the binary LMSR mixture.

Because the experiments use a binary LMSR market, the state can be reduced to the one-dimensional coordinate
\[
x = q_1-q_2.
\]
By translation invariance of LMSR, all relevant price and cost differences can be represented along this one-dimensional slice, which makes the switch-budget computation tractable. To approximate the positive switching fee, we use a discrete grid $X_{\text{grid}}$ of 5001 points linearly spaced between $-140.0$ and $140.0$, defined as
\[
\texttt{x\_grid}
=
\texttt{np.linspace}(-140.0,140.0,5001).
\]
For each weight update, the fee is computed by evaluating
\[
C_{\mix}(x;w_t)-C_{\mix}(x;w_{t+1})
\]
over this grid, taking the maximum, and then applying the positive part
\[
\mathrm{fee}_t
\approx
\left[
\max_{x\in \texttt{x\_grid}}
\bigl(
C_{\mix}(x;w_t)-C_{\mix}(x;w_{t+1})
\bigr)
\right]_+ .
\]
This numerical approximation is efficient because the binary reduction turns the supremum over states into a one-dimensional search.

The plotted diagnostics report effective liquidity, adaptive slippage, and exposure. Slippage is computed as the Bregman divergence of the relevant cost function. The exposure panel compares the adaptive mixed liability to the liabilities of the fixed low- and high-liquidity LMSR markets. Although the code also tracks the cumulative fee reserve, the displayed protected exposure is computed as the positive part of the mixed liability rather than subtracting the cumulative reserve.



\section{Examples of Adaptive Markets}
\label{app:examples}

We illustrate how the adaptive market framework specializes to two
canonical settings: (i) Arrow--Debreu (categorical) markets and (ii)
pair-betting markets over permutations. In both cases, the adaptive
market is defined by combining a family of cost functions via the
log-sum-exp mixture.

\subsection{Arrow--Debreu markets}

Let $\mathcal O=\{1,\dots,K\}$ with payoff vectors $\rho(i)=e_i$,
so $q \in \mathbb{R}^K$ and prices lie in the simplex $\Delta_K$.

We use scaled LMSR costs
\[
C_{\eta}(q)
=
\eta \log\!\Big(\sum_{i=1}^K e^{q_i/\eta}\Big).
\]

Define experts $C_k := C_{\eta_k}$. The mixture cost is
\[
C_{\mix}(q; w_t)
=
\frac{1}{\beta}
\log\!\Big(
\sum_{k=1}^M w_t(k)\,
\exp\!\big(\beta\, C_{\eta_k}(q)\big)
\Big).
\]

Each expert gradient is
\[
\nabla C_{\eta_k}(q)_i
=
\frac{\exp(q_i/\eta_k)}
{\sum_{j=1}^K \exp(q_j/\eta_k)},
\]
so the mixture gradient becomes
\[
\nabla C_{\mix}(q; w_t)_i
=
\sum_{k=1}^M \pi_t(k;q)\,
\frac{\exp(q_i/\eta_k)}
{\sum_{j=1}^K \exp(q_j/\eta_k)}.
\]
Hence, prices are convex combinations of softmax distributions at
different scales.

\subsection{Pair-betting markets}

Let the finite outcome space be the set of permutations $\mathcal O = S_n$. For each ordered pair
$(i,j)$ with $i \neq j$, define
\[
\rho(\sigma)_{ij} = \mathbf{1}\{\sigma(i) < \sigma(j)\}.
\]
Then $q \in \mathbb{R}^{n(n-1)}$ and
\[
\langle q,\rho(\sigma)\rangle
=
\sum_{i\neq j} q_{ij}\,\mathbf{1}\{\sigma(i)<\sigma(j)\}.
\]

We use the log-partition cost
\[
C(q)
=
\log\!\Big(
\sum_{\sigma \in S_n}
\exp\!\big(\langle q, \rho(\sigma) \rangle\big)
\Big),
\]
and scaled versions $C_\eta(q)=\eta C(q/\eta)$.

Define experts $C_k := C_{\eta_k}$. Then the mixture cost is
\[
C_{\mix}(q; w_t)
=
\frac{1}{\beta}
\log\!\Big(
\sum_{k=1}^M w_t(k)\,
\exp\!\big(\beta\, C_{\eta_k}(q)\big)
\Big).
\]

Each expert gradient is
\[
\nabla C_{\eta_k}(q)_{ij}
=
\mathbb{P}_{q/\eta_k}\!\big(\sigma(i)<\sigma(j)\big),
\]
where
\[
\mathbb{P}_{q/\eta_k}(\sigma)
=
\frac{\exp(\langle q/\eta_k,\rho(\sigma)\rangle)}
{\sum_{\tau\in S_n}\exp(\langle q/\eta_k,\rho(\tau)\rangle)}.
\]

Thus, the mixture gradient is
\[
\nabla C_{\mix}(q; w_t)_{ij}
=
\sum_{k=1}^M \pi_t(k;q)\,
\mathbb{P}_{q/\eta_k}\!\big(\sigma(i)<\sigma(j)\big),
\]
i.e., a convex combination of pairwise probabilities induced by each
expert.



\end{document}